\definecolor{BlueRome}{HTML}{4287f5}
\definecolor{C1}{RGB}{52, 89, 149}
\definecolor{C2}{RGB}{251, 77, 61}
\definecolor{C3}{RGB}{3, 206, 164}
\definecolor{C4}{RGB}{202, 21, 81}
\newtheorem{thm}{Theorem}
\newtheorem{lemma}[thm]{Lemma}
\newtheorem{prop}[thm]{Proposition}
\newtheorem{cor}[thm]{Corollary}
\theoremstyle{remark}
\newcommand*{\ot}{\otimes}
\newcommand*{\nn}{\nonumber}
\newcommand*{\id}{\mathds{1}}
\newcommand*{\mc}{\mathcal}
\newcommand*{\dg}{\dagger}
\DeclareMathOperator{\tr}{tr}
\newenvironment{hproof}{%
  \proof}{\endproof}
\begin{document}
\title[]{{Bridging Entanglement and Magic Resources within Operator Space}} 

\author{Neil Dowling}
\email[]{ndowling@uni-koeln.de}
\affiliation{Institut f\"ur Theoretische Physik, Universit\"at zu K\"oln, Z\"ulpicher Strasse 77, 50937 K\"oln, Germany}

\author{Kavan Modi}
\affiliation{School of Physics \& Astronomy, Monash University, Clayton, VIC 3800, Australia}

\author{Gregory A. L. White}
\affiliation{Dahlem Center for Complex Quantum Systems, Freie Universit\"at Berlin, 14195 Berlin, Germany}

% \date{\today}
\pacs{}

% <600 characters - = 599 at the moment!!! (w/o spaces)
\begin{abstract}
    Local-operator entanglement (LOE) dictates the complexity of simulating Heisenberg evolution using tensor network methods, {and bears witness to many-body chaos for local dynamics}. We show that LOE is also sensitive to how non-Clifford a unitary is: its magic resources. In particular, we prove that LOE is always upper-bound by three distinct magic monotones: $T$-count, unitary nullity, and operator stabilizer R\'enyi entropy. Moreover, in the average case for large, random circuits, LOE and magic monotones approximately coincide. 
    Our results imply that an operator evolution that is expensive to simulate using tensor network methods must also be inefficient using both stabilizer and Pauli truncation methods. {In terms of a previous conjecture on the characteristic scaling of LOE, our results also mean that non-integrable spin chains cannot be simulated classically}. Entanglement in operator space therefore measures a unified picture of non-classical resources, in stark contrast to the Schr\"odinger picture. 
\end{abstract}

\keywords{Quantum chaos, Many-body quantum physics}

\maketitle
% \com{LENGTH  Limit is 3750}

\textit{Introduction.---} The growth of non-classical resources in quantum dynamics indicates a necessary piece of the puzzle separating classical and quantum simulability. Understanding this separation is essential to a complete characterization of such systems, both from an algorithmic and from a physical perspective.
Arguably the most famous of these resources is entanglement: it both governs the efficiency of tensor network methods~\cite{Verstraete2006,Schuch_2008}, and witnesses quantum critical phase transitions~\cite{Vidal20003,Pasquale_Calabrese_2004,Hastings_2007,Eisert2010area,Silvi2010}. 
Similarly, the concept of non-stabilizerness---or \emph{magic}---has emerged as another key resource of non-classicality. It is well-known that Clifford dynamics can be simulated efficiently on a classical computer by tracking the $N$ stabilizer generators of an $N$-qubit state~\cite{gottesman1998, Aaronson2004}. Including a $T$-gate in the gateset is all that is required to remove these generators and achieve universality---and indeed the best known Clifford+$T$ simulators scale exponentially in the number of $T$-gates~\cite{Pashayan2022,Bravyi2019simulationofquantum}. This departure from stabilizerness can be formalized rigorously in the theory of magic, where $T$-count forms just one example aspect~\cite{PhysRevLett.118.090501,Beverland_2020,Leone2022stab,Liu_2022}. Recently, magic resources have also been found to play a key role in phase transitions~\cite{White2021Conformal, niroula2024phaset, catalano2024magicphase,Fux2024,Bejan2024}, dynamical complexity~\cite{Leone2022stab, garcia_resource_2023, Oliviero2024, gu2024magicinduced, Ahmadi_2024, odavic2024stabiliz}, and (pseudo-)randomness~\cite{Haferkamp_2022, Leone2021quantumchaosis}.

Given the clear role each of these concepts individually play in both quantum information theory and many-body physics,
one would hope to transitively understand the direct connection between the two.
A question of foundational importance is therefore: \textit{How are the two disparate resources of magic and entanglement related to one another?} The solution is not a priori obvious, at least in the usual quantum state setting. For instance, entanglement tends to grow maximally {for} Clifford circuits. On the other hand, product states can have high magic resource. And although details of the entanglement spectrum~\cite{Chamon2014-cb} can serve as witness to magic in a state~\cite{Tirrito2024, Xhek2023flatness}, {this is true only on average and moreover the operational utility of this relation is unclear.}
Even more puzzlingly, there appears no clear connection to the integrability of the underlying dynamics in a many-body setting. Locally-interacting integrable and chaotic models alike produce linearly growing entanglement after a quench~\cite{Calabrese_2005,DeChiara_2006,Prosen2007,Mezei_2017,Nahum2017}, while recent evidence suggests that state magic resources saturate in logarithmic time~\cite{turkeshi2024magicspreadingrandomquantum,tirrito2024anticoncentr,odavi2024stabiliz}. Despite this disparity, there has been recent interest in studying the interplay of, and boundary between, these fundamental resources~\cite{masotllima2024,fux_disentangling_2024,gu2024magicinduced,Frau2024,Bu_2024}.

\begin{figure}[t]
    \centering
    \includegraphics[width=\linewidth]{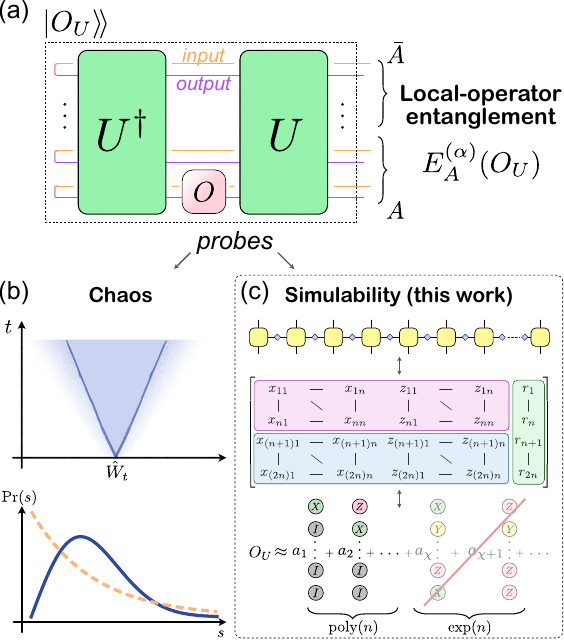}
    \caption{Depiction of key quantities investigated in this work. (a) The local-operator entanglement (LOE) is the entanglement $E_A^{(\alpha)}$ of the Choi state of an operator, $|O_U\rangle\!\rangle$, across some (doubled-space) spatial bipartition $A:\bar{A}$. 
    % for quantum $\alpha$-R\'enyi entanglement entropies. 
    (b) The extensive scaling of this quantity {has been observed to be related to chaos in many-body systems}, both in the sense of it necessarily indicating information scrambling~\cite{dowling2023scrambling} and {witnessing the non-integrability of spin chains}~\cite{Prosen2007}. (c) In addition to being {sensitive to the entanglement of a unitary (cf. App.~\ref{ap:oe})}, we show that the LOE equally probes magic resources and operator stabilizer entropies, unifying the three corresponding simulation techniques under the one umbrella. 
    }
    \label{fig:main-results}
\end{figure}

{Here, we pose a new understanding of this relationship.} We show that magic is instead conceptually and quantitatively {connected in a direct way to \emph{operator} entanglement. By shifting perspective from states to operators, the key insight is that the free operators in the stabilizer resource theory (Pauli strings) are also free in operator entanglement theory, namely are product operators.} More specifically, consider an operator $O$ in the Heisenberg picture with respect to some evolution $U$, i.e., $O_U= U^\dagger O U $. By the Choi–Jamio\l kowski isomorphism, $O_U$ has a dual pure state, $|O_U\rangle\!\rangle := ( O_U \otimes \id) \ket{\phi^+}$,
where $\ket{\phi^+}$ is the (normalized) maximally entangled state on a doubled Hilbert space. Entanglement and magic resources then have ready generalizations to operators through $|O_U\rangle\!\rangle$. The entanglement $E_A^{(\alpha)}(O_U)$ of this {Choi} state---once appropriately arranged into spatial partitions---is termed the local-operator entanglement (LOE)~\cite{Prosen2007a} {(alternatively `operator-space entanglement entropy')}, and has been studied in a variety of many-body systems~\cite{Prosen2007,Prosen2007a,Prosen2009,Dubail_2017,Jonay2018,Alba2019,Kos2020,Kos2020II,Alba2021}; see 
Fig.~\ref{fig:main-results}.

Our main result can be succinctly summarized as an upper bound on LOE, 
\begin{equation}
    E_A^{(\alpha)}(O_U) \leq m, \label{eq:main1}
\end{equation}
where $m$ refers to any of three different magic monotones: $T$-count, unitary nullity~\cite{Jiang2023}, or operator stabilizer R\'enyi entropy (OSE)~\cite{dowling2024magicheis}. Clearly, entangling dynamics are necessary for LOE growth {(i.e. it is trivially zero for product unitaries{; see App.~\ref{ap:oe}}).} Our result shows intriguingly that it is not sufficient; LOE is limited also by the amount of magic generated by $U$. The quantity therefore represents the interplay between both resources. 
Moreover, we study the typical behavior of LOE over ensembles with restricted magic resources, finding that this upper bound is approximately saturated for such circuits.
The relevant monotones are defined explicitly in Table~\ref{tab:montotones}, while the ensembles where the above bound is saturated are depicted in Fig.~\ref{fig:ensembles}: the well-studied $T$-doped Clifford ensemble~\cite{Haferkamp_2022,Leone2021quantumchaosis,haug2024probingquantumcomplexityuniversal}, and a $\nu$-compressible ensemble which we introduce in this work. 

Eq.~\eqref{eq:main1} is the subject of the remainder of this work, with the full technical version to be found later (Thms.~\ref{thm:main_lower}--\ref{thm:main}). Our result sits in stark contrast to the corresponding state resource theories, where no such relation exists: states with high magic can be unentangled, while zero-magic (stabilizer) states can have maximal entanglement. Beyond the foundational significance of Eq.~\eqref{eq:main1}, there are operational implications. {A range of locally interacting many-body systems} have been observed to have linearly scaling LOE with time, whereas it {tends to} grow at-fastest logarithmically for integrable {spin chains}~\cite{Prosen2007,Prosen2007a,Prosen2009,Dubail_2017,Jonay2018,Alba2019,Kos2020,Kos2020II,Alba2021,ermakov2024polyno}.
{In the context of these past results, Eq.~\eqref{eq:main1} suggests that also an extensive growth of magic resources is a necessary ingredient for {generic} non-integrability.} This means that {`chaotic' (linear)} growth of LOE implies the impossibility of efficient simulation using: tensor network, stabilizer, and Pauli truncation methods (see Cor.~\ref{cor:sim}), serving as a unifying signature of non-classicality. 

\textit{Average Operator Entanglement from Magic.---} Before detailing Thms.~\ref{thm:main_lower}--\ref{thm:main}, we first review the pertinent monotones (summarized in Table~\ref{tab:montotones}) and introduce the unitary ensembles (depicted in Fig.~\ref{fig:ensembles}). Consider an $N$-qubit Hilbert space $\mc{H}$ with total dimension $D=2^N$, and denote by $\mathcal{U}_N$, $\mathcal{C}_N$, and {$\tilde{\mathcal{P}}_N$ the $N$-qubit unitary, Clifford and Pauli groups respectively, with $\mathcal{P}_N = \tilde{\mathcal{P}}_N / \{ \pm i \id\}$ the Pauli strings.} By $U \sim \mathcal{U}_N$, we mean that $U$ is drawn from the unitarily invariant (Haar) measure on $\mathcal{U}_N$, with an equivalent expression for the Cliffords.

Consider a Heisenberg operator $O_U=U^\dagger O U$ for some non-trivial initial Pauli operator $O \in \mc{P}_N \backslash \{\id \}$ and a unitary propagator $U$. The LOE is defined as the bipartite entanglement of the Choi state $|O_U\rangle\!\rangle$ across a chosen bipartition $\mc{H}=\mc{H}_A \otimes \mc{H}_{\bar{A}}$, $
    E^{(\alpha)}_A(O_U) := S^{(\alpha)}( \tr_{\bar{A}}[|O_U\rangle\!\rangle\!\langle\!\langle O_U|] )$, where $S^{(\alpha)}$ represents the (quantum) $\alpha$-R\'enyi entropy, $
    S^{(\alpha)}(\rho) := ({1-\alpha})^{-1}\log(\tr[\rho^\alpha])$.
Through abuse of notation, $\tr_{\bar{A}}$ refers to a partial trace over subsystem ${\bar{A}}$ in the doubled space $\mc{H}_{\bar{A}} \otimes \mc{H}_{\bar{A}}$ with dimension $D_{\bar{A}}^2${, where $\bar{A}$ denotes the complement of $A$}. 
Moreover, the operator purity of $O_U$ is defined in the usual way as $E^{(\text{pur})}_A(O_U) := \exp(-E^{(2)}_A(O_U))$.
To first provide intuition on the interplay between LOE and magic resources, consider Clifford evolution, $C \in \mc{C}_N$. It is immediate to see from the definition of the Clifford group that starting from any initial Pauli operator, $O$ will evolve to some other Pauli operator and so the LOE is preserved to be zero. 
This invites the question of whether a more quantitative relation can be derived: if a unitary has only a few non-Clifford gates, how does the LOE grow? We will address this first by computing the average LOE for ensembles with a tunable magic monotone and then show that there is a precise dependence.

\begin{figure}[t]
    \centering
    \includegraphics[width=\linewidth]{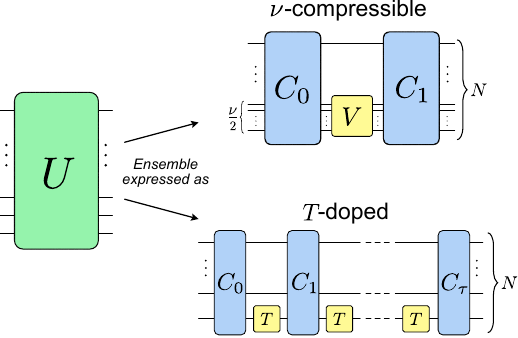}
    \caption{Any unitary can be decomposed as a global Clifford unitaries, $C_0,C_1 \in \mc{C}_N$, sandwiching a non-Clifford unitary on only $\nu(U)/2$ qubits, $V \in \mc{U}_\nu$ (top), where $\nu(U)$ is the unitary nullity~\cite{Jiang2023}. A unitary with $T$-count $\tau(U)$ can be decomposed as layers of $N$-qubit Clifford unitaries $C_i\in\mc{C}_N$ interspersed with $\tau(U)$ single-site $T$-gates (bottom). The $\nu$-compressible ($\mu_\nu$) and $T$-doped ($\mu_\tau$) ensembles are constructed from uniformly sampling these components over the Clifford and unitary groups as appropriate.
    }
    \label{fig:ensembles}
\end{figure}

Consider deep Clifford circuits doped with some non-Clifford gates. That is, random Clifford circuits interspersed by $\tau(U)$ single-site $T$-gates, where $\tau(U)$ is the familiar $T$-count (see Table~\ref{tab:montotones}). If we randomize the Clifford components, such an ensemble of unitaries $\mu_\tau$ is called the $T$-doped Clifford ensemble~\cite{Haferkamp_2022,Leone2021quantumchaosis}, as depicted in Fig.~\ref{fig:ensembles}. Harnessing Weingarten techniques for averaging over the Clifford group~\cite{zhu2016cliffordgroupfailsgracefully,Roth_2018,Leone2021quantumchaosis}, we can readily compute the operator purity for this example (see App.~\ref{ap:tau_lower-bounds}), finding on-average that $\int_{U \sim \mu_\tau} {E^{(\text{pur})}_A(O_U)} = \mc{O}(\exp(-\tau))$. Such a relationship points towards a proportionality between LOE and magic resources. However, {starting from an arbitrary unitary, it is not always possible for it to be compiled with Clifford + $T$ gates exactly. Therefore, }while {conceptually} useful, the $T$-count of a unitary is not {a priori} well-defined. {Moreover, even without this issue, the measure is \textsf{NP}-hard to compute in practice~\cite{vandewetering2024optimisingquantumcircuitsgenerally}}.

A more concrete magic resource stemming from the algebraic structure of the Clifford group is the stabilizer nullity~\cite{Beverland_2020}. For quantum states $\ket{\psi}$, this is defined in terms of the cardinality of its stabilizer group: the number of Pauli operators $P\in\mc{P}_N$ which satisfy $|\langle \psi |P|\psi\rangle | = 1$. The nullity of a unitary operator can be defined analogously in terms of the Choi state $| U \rangle \! \rangle$~\cite{Jiang2023}, $
    \nu(U) := 2N - \log_2 (|\mathrm{Stab}(|{U}\rangle \! \rangle)|)$,
with $\mathrm{Stab}(| U \rangle \! \rangle)=\{ P_1,P_2 \in \mathcal{P}_N: \tr[P_1 U^\dagger P_2 U]/D = \pm1 \}$.
The unitary nullity is a non-negative integer and upper-bounds the state nullity of $U$ acting on any stabilizer state. {Moreover, it gives rise to a powerful representation theorem~\cite{Leone2024learningtdoped,Leone2024learning,gu2024magicinduced}: that any unitary can be decomposed as 
\begin{equation}
    U=C_0 (V_{\ell} \otimes \id_{N-\ell}) C_1, \label{eq:nu-compress-U}
\end{equation}
where $C_0,C_1 \in \mc{C}_N$ may act globally, while $V_{\ell} \in \mc{U}_{\ell }$ acts on $ \nu/2 \leq \ell \leq \nu$ qubits. Here, the lower-bound comes from a counting argument of the unitary stabilizer group generators, while the upper bound is realized explicitly in Ref.~\cite{Leone2024learning}. }

Returning to the question of the LOE's sensitivity to magic resources, we can also compute the average-case operator purity for a unitary with a given unitary nullity $\nu(U)$. We define the $\nu$-compressible ensemble of unitaries $\mu_\nu$ as that generated from independently and uniformly sampling $C_0,C_1 \in \mc{C}_N$ and $V \in \mc{U}_{\ell}$ in Eq.~\eqref{eq:nu-compress-U}, {parametrized by an integer $\ell$. Here, for the Haar random sampling of $V$, the unitary nullity of $U$ will almost surely have nullity $\nu(U)= 2 \ell$. This comes from the invariance of unitary nullity under Clifford composition~\cite{Jiang2023}, and from the fact that Haar random unitaries typically have maximal nullity~\cite{gu2024magicinduced,dowling2024magicheis}: $\nu(U) = \nu(V_{\ell} \otimes \id_{N-\ell}) = \nu(V_{\ell}) \approx 2 \ell$.
In the above, we have also used that nullity is additive under tensor products. Elements of $\mu_\nu$ are therefore typically defined by the decomposition  $U\sim \mu_\nu$: $U=C_0 (V_{\lceil \nu/2 \rceil } \otimes \id_{N-\lceil \nu/2 \rceil }) C_1$, and so we identify $\ell=\nu/2$ in the remainder of this work; see Fig.~\ref{fig:ensembles}.}

Similar to the case of the $T$-doped ensemble, from exact Weingarten techniques we can prove that operator purity on-average scales as $\int_{U \sim \mu_\nu} {E^{(\text{pur})}_A(O_U)} = \mc{O}(\exp(-\nu))$ (see App.~\ref{ap:nu_lowerbound}.

\begin{table}
    \begin{center}
        \begin{tabular}{ c|p{6.2cm} } 
         \textbf{Monotone }& \textbf{Definition} \\ 
         \hline
         LOE & $E^{(\alpha)}_{A}(O_U):= S^{(\alpha)}(\tr_{\bar{A}}[|O_U\rangle\!\rangle\!\langle\!\langle O_U|])$  \\ 
         \hline
         OSE & $M^{(\alpha)}(O_U):= S^{(\alpha)}\!\left(\{\left(\frac{1}{D}\tr[O_U P_i]\right)^{2}\}_{P_i \in \mc{P}_N} \right)$  \\ 
         \hline
         Unitary nullity &{$\nu(U):=2N - \log_2 (|\mathrm{Stab}(|{U}\rangle \! \rangle)|)$}\\ 
         \hline
         $T$-count &  Minimum number $\tau(U)$  
         such that $U=C_0 \prod_{i=1}^{\tau}  (T C_i)$, for $C_i \in \mc{C}_N$.
        \end{tabular}
        \end{center}
        \caption{{Operator entanglement} (first row) and magic (next three rows) monotones that feature in our results. $S^{(\alpha)}$ refers to the $\alpha-$R\'enyi entropy: its quantum and classical version in the first and second row respectively. 
        }
    \label{tab:montotones}
    \end{table}
To clarify the relation of LOE and magic in the two discussed ensembles, we summarize the previous discussion in the following lower-bounds.
\begin{thm} \label{thm:main_lower}
    For R\'enyi indices of $\alpha \leq 2$, and $N_A=N/2$, the average LOE is bounded as
    \begin{equation} \label{eq:lower-bounds}
        \int_{U \sim \mc{E}} E^{(\alpha)}_A(O_U) \geq  \begin{cases}
     \nu -2 - \mc{O}\left( \frac{2^{\nu}}{D} + \frac{1}{4^{\nu}}\right),& \mc{E} = \mu_{\nu},\\
    \log\!\left(\frac{4}{3}\right) \tau - \mc{O}\left( \frac{(4/3)^\tau}{D}\right) ,              & \mc{E} = \mu_{\tau}.
    \end{cases}
    \end{equation}
    Here, the lower-bounds are always positive, and $\mu_\tau$ and $\mu_\nu$ refer to the $T$-doped Clifford and $\nu$-compressible ensembles respectively.
\end{thm}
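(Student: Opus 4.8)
The plan is to reduce both cases to a single scalar already evaluated for these ensembles—the averaged operator purity $\int_{U\sim\mathcal{E}} E^{(\text{pur})}_A(O_U)$—and then to track constants and corrections through an elementary chain of inequalities. First I would invoke monotonicity of the R\'enyi entropy in its index: since $S^{(\alpha)}$ is non-increasing in $\alpha$, the hypothesis $\alpha\leq 2$ gives the pointwise bound $E^{(\alpha)}_A(O_U)\geq E^{(2)}_A(O_U)$ for every $U$, which survives averaging. Note that this step is only loosened in the favorable direction, so the resulting lower bound holds uniformly for all $\alpha\leq 2$ with no $\alpha$-dependence in the error. It therefore suffices to lower-bound $\int_{U\sim\mathcal{E}} E^{(2)}_A(O_U)$, and by definition of the operator purity (with all logarithms in base $2$, matching the bit-valued nullity) this equals $\int_{U\sim\mathcal{E}}\big[-\log E^{(\text{pur})}_A(O_U)\big]$.

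Next I would exchange the average and the logarithm using Jensen's inequality. Since $-\log$ is convex,
\[
\int_{U\sim\mathcal{E}} \left[-\log E^{(\text{pur})}_A(O_U)\right] \;\geq\; -\log \int_{U\sim\mathcal{E}} E^{(\text{pur})}_A(O_U),
\]
so the bound is controlled entirely by the \emph{averaged} purity. This is precisely the object computed in Apps.~\ref{ap:tau_lower-bounds} and \ref{ap:nu_lowerbound}, where (for $N_A=N/2$) one finds $\int_{\mu_\nu} E^{(\text{pur})}_A \sim 4\cdot 2^{-\nu}$ and $\int_{\mu_\tau} E^{(\text{pur})}_A \sim (3/4)^{\tau}$ up to multiplicative corrections. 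Substituting these and expanding $-\log(1+\epsilon)=-\mathcal{O}(\epsilon)$ gives the leading terms $-\log(4\cdot 2^{-\nu})=\nu-2$ and $-\log((3/4)^{\tau})=\log(4/3)\,\tau$, while the multiplicative corrections become the additive errors $\mathcal{O}(2^\nu/D+4^{-\nu})$ and $\mathcal{O}((4/3)^\tau/D)$ of Eq.~\eqref{eq:lower-bounds}. Positivity then follows in the regime of interest, since for $D$ large and $\nu,\tau\geq1$ the leading term dominates the exponentially suppressed remainder.

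The conceptual content is thus light—R\'enyi monotonicity and one application of Jensen—and essentially all of the work is upstream, in the exact averaged-purity computation that pins down the constants $4$ and $3/4$ together with their subleading corrections. I therefore expect the main obstacle to be obtaining these averages to \emph{sub-leading} order rather than merely their $\mathcal{O}(e^{-\nu})$, $\mathcal{O}(e^{-\tau})$ scaling. The averaged purity is a \emph{fourth}-moment object in $U$ (the operator purity being a sum of out-of-time-order correlators), so its evaluation combines the combinatorics of Clifford conjugation on Pauli strings—for the global Cliffords $C_0,C_1$ a uniformly random Clifford sends the fixed nontrivial $O$ to a uniformly random nontrivial Pauli—with the genuine Haar fourth-moment (Weingarten) average over the $\nu/2$-qubit block $V$ of Eq.~\eqref{eq:nu-compress-U}, and the analogous per-$T$-gate accounting for $\mu_\tau$, all specialised to the half-system cut. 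It is exactly the failure of the Clifford group to be a $4$-design that makes this moment sensitive to $\nu$ and $\tau$, which is why magic enters at all. A secondary point to flag—not for validity of the lower bound, which Jensen guarantees unconditionally, but for its eventual tightness against the matching upper bound of Thm.~\ref{thm:main}—is the size of the Jensen gap, i.e.\ how strongly $E^{(\text{pur})}_A$ concentrates about its mean across each ensemble.
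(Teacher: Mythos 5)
Your proposal is correct and follows essentially the same route as the paper: reduce to the $\alpha=2$ case via the R\'enyi hierarchy, convert to the averaged operator purity via Jensen's inequality for $-\log$, and feed in the exact fourth-moment (Clifford/Haar Weingarten) computations of $\int E^{(\text{pur})}_A$ for the two ensembles, whose leading values $4\cdot 2^{-\nu}$ and $(3/4)^\tau$ yield precisely the constants $\nu-2$ and $\log(4/3)\,\tau$. Your identification of where the real work lies—the replica-space Weingarten averages, sensitive to $\nu$ and $\tau$ exactly because the Clifford group fails to be a $4$-design—and your closing remark on the Jensen gap both accurately reflect the structure and caveats of the paper's own argument.
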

\begin{hproof}
    We first directly compute the LOE purity using (Clifford/unitary) Weingarten techniques in four-replica space~\cite{collins_integration_2006,zhu2016cliffordgroupfailsgracefully,Roth_2018,Leone2021quantumchaosis}, which led to the previous scalings discussed in the text. Then, after applying Jensen's inequality for the negative logarithm involved in relating purity to $2-$R\'enyi entropy, together with the hierarchy of R\'enyi entropies ($S^{(\alpha)}\leq S^{(\beta)}$ for $\alpha \geq \beta$), we arrive directly at Eq.~\eqref{eq:lower-bounds}. A detailed proof together with the full expressions are given in Apps.~\ref{ap:nu_lowerbound}-\ref{ap:tau_lower-bounds}.
\end{hproof}
{Here, we} see a leading-order linear dependence of LOE with two magic monotones; c.f. Eq.~\eqref{eq:main1}. Interestingly, the dependence on $(3/4)^\tau$ in the $T$-count bound is also observed to leading order for the linear stabilizer R\'enyi entropy of the $T-$doped ensemble~\cite{Leone2022stab,haug2024probingquantumcomplexityuniversal}.

Before moving on to the exact relation between LOE and magic resources, we would like to compare the two lower-bounds of Eq.~\eqref{eq:lower-bounds}. The ensembles $\mu_\tau$ and $\mu_\nu$ coincide only for two cases: $\{\tau=\nu =0\}$, in which case they correspond uniform measure over Clifford group, and $\{ \tau \to \infty, \nu \to 2 N\}$, in which case they correspond to the unitary Haar ensemble. In the former case, as previously discussed, the LOE is trivially equal to zero. In the latter case, we can use our previous results to arrive at the Page-curve for the average LOE purity from Haar random dynamics,
\begin{equation}
    \int_{U \sim \mc{U}_N} E^{(\text{pur})}_{A}(O_U) = \frac{-19 + D + 2 D^2}{(1 + D) (-9 + D^2)}. \label{eq:haar}
\end{equation}
{This result is of independent interest,} improving upon a leading-order expression given in Ref.~\cite{Kudler-Flam2021}.

\textit{Unifying {Operator Entanglement} and Magic.---} 
Having studied the average case, we will now {present exact bounds between LOE and magic} for arbitrary R\'enyi indices and any given evolution, formalizing Eq.~\eqref{eq:main1}. In order to prove our main result, we first review one further magic monotone, operator stabilizer entropy (OSE). The OSE is defined as the entropy of the distribution of squared amplitudes of the Heisenberg operator $O_U$ written in the Pauli basis~\cite{dowling2024magicheis}, $M^{(\alpha)}(O_U) := (1-\alpha)^{-1} \log \sum_{P \in \mathcal{P}} \left(D^{-1}\tr[O_U P]\right)^{2\alpha}$.
This quantity generalizes a popular state measure of magic~\cite{Leone2022stab,Haug_2023,leone2024monotonesmagic}, lower-bounding other magic monotones while satisfying a light-cone bound for local dynamics an dictating the efficiency of Pauli-truncation methods. 
\begin{restatable}{thm}{mainThm} \label{thm:main}
    For any $N$-qubit unitary $U$, any initial operator $O \in \mc{P}_N \backslash \{\id \}$, for any $\alpha \geq 0$, any bipartition $\mc{H}=\mc{H}_A \otimes \mc{H}_{\bar{A}}$, and any unitary propagator $U$, the LOE $\alpha$-R\'enyi entropy satisfies,
    \begin{equation}
         E^{(\alpha)}_A(O_U) \leq {M}^{(\alpha)}(O_U) \leq \nu(U) \leq  \tau(U).\label{eq:main}
    \end{equation}
    Here, $\tau(U)$, $\nu(U)$, and ${M}^{(\alpha)}(O_U)$ are the $T$-count, unitary nullity, and OSE respectively.  
\end{restatable}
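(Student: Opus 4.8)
The plan is to prove the chain in Eq.~\eqref{eq:main} as three independent links---$E^{(\alpha)}_A(O_U)\le M^{(\alpha)}(O_U)$, then $M^{(\alpha)}(O_U)\le\nu(U)$, then $\nu(U)\le\tau(U)$---each isolating a different structural role of the Pauli basis. Throughout I expand the (Hermitian) Heisenberg operator as $O_U=\sum_{P\in\mc{P}_N}c_P\,P$ with real coefficients $c_P=D^{-1}\tr[O_U P]$, so that $p_P:=c_P^2$ is a probability distribution (using $O_U^2=\id$, since $O$ is a nontrivial Pauli) and $|O_U\rangle\!\rangle=\sum_P c_P\,|P\rangle\!\rangle$ in the orthonormal Bell--Pauli basis.

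\emph{First link.} The conceptual input is that each $|P\rangle\!\rangle$ is a product state across the spatial bipartition, $|P\rangle\!\rangle=|P_A\rangle\!\rangle\otimes|P_{\bar A}\rangle\!\rangle$, i.e.\ Pauli strings are free (product) operators in operator space. Hence the diagonal of $\rho_A=\tr_{\bar A}[|O_U\rangle\!\rangle\!\langle\!\langle O_U|]$ in the basis $\{|P_A\rangle\!\rangle\}$ is exactly the marginal $q_{P_A}=\sum_{P_{\bar A}}p_P$ of the OSE distribution. I would then combine two standard facts: Schur--Horn majorization (the eigenvalues of $\rho_A$ majorize its diagonal) with the Schur-concavity of $S^{(\alpha)}$ for all $\alpha\ge 0$, giving $E^{(\alpha)}_A(O_U)\le S^{(\alpha)}(\{q_{P_A}\})$; and the monotonicity of the R\'enyi entropy under marginalization (coarse-graining), giving $S^{(\alpha)}(\{q_{P_A}\})\le S^{(\alpha)}(\{p_P\})=M^{(\alpha)}(O_U)$.

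\emph{Second link.} Because $S^{(\alpha)}$ is non-increasing in $\alpha$, it suffices to bound the max-entropy, $M^{(\alpha)}(O_U)\le M^{(0)}(O_U)=\log_2|\mathrm{supp}(O_U)|$, so the claim reduces to $|\mathrm{supp}(O_U)|\le 2^{\nu(U)}$. Consider the group $\mathcal{S}=\{P\in\mc{P}_N:UPU^\dagger\in\pm\mc{P}_N\}$ of Paulis preserved by conjugation, of order $2^{2N-\nu(U)}$. A direct computation gives $P\,O_U\,P^\dagger=U^\dagger(\tilde{P}O\tilde{P}^\dagger)U=\pm O_U$ for every $P\in\mathcal{S}$, where $\tilde P=UPU^\dagger$ is Pauli. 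Matching Pauli coefficients forces every string in $\mathrm{supp}(O_U)$ to share the same commutation relation with each element of $\mathcal{S}$; hence any product of two support elements centralizes $\mathcal{S}$, and the support lies in a single coset of the centralizer $\mathcal{Z}(\mathcal{S})$. Identifying $\mc{P}_N$ with the symplectic space $\mathbb{F}_2^{2N}$, the centralizer is the symplectic complement of $\mathcal{S}$, of dimension $2N-(2N-\nu)=\nu$, so $|\mathrm{supp}(O_U)|\le|\mathcal{Z}(\mathcal{S})|=2^{\nu(U)}$.

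\emph{Third link and main obstacle.} For $\nu\le\tau$ I would first establish subadditivity of the nullity under composition, $\nu(AB)\le\nu(A)+\nu(B)$: the preserved-Pauli group of a product obeys $\mathcal{S}(AB)\cong\sigma_B(\mathcal{S}(B))\cap\mathcal{S}(A)$ with $\sigma_B(\cdot)=B(\cdot)B^\dagger$, an intersection of two subgroups of the \emph{abelian} group $\mc{P}_N\cong\mathbb{F}_2^{2N}$ of codimensions $\nu(B)$ and $\nu(A)$, and the subspace formula $\dim(V\cap W)\ge\dim V+\dim W-2N$ gives the bound. Applying this to the optimal decomposition $U=C_0\prod_{i=1}^{\tau}(TC_i)$ with $\nu(C)=0$ and $\nu(T\otimes\id_{N-1})=\nu(T)=1$ (only $\{\id,Z\}$ survive $T$-conjugation) yields $\nu(U)\le\tau(U)$. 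I expect the main obstacle to be the support-counting argument of the second link: pinning down that the \emph{entire} Pauli support of $O_U$ collapses into one coset of $\mathcal{Z}(\mathcal{S})$ is where the nullity enters sharply, through the symplectic-complement dimension count, whereas the outer two links reduce to comparatively routine majorization and subgroup-counting.
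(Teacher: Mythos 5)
Your chain and its three intermediate quantities are exactly those of the paper's proof (App.~\ref{ap:upper_bound}), and all three of your links are sound, so the proposal is correct; the interest lies in how your execution of the last two links is genuinely more self-contained than the paper's. For the first link the paper argues, as you do, that $\{|P\rangle\!\rangle\}$ is a bipartite-product basis and that entanglement is the minimal participation entropy over such bases (Prop.~\ref{prop:osreBound}); your Schur--Horn-plus-marginalization-monotonicity argument is the standard rigorous form of that same idea, valid uniformly for $\alpha\ge 0$. The real divergences are elsewhere. For $M^{(\alpha)}(O_U)\le\nu(U)$ (Prop.~\ref{prop:nu-bound}) the paper merely asserts that $O_U$ is a superposition of at most $2^{\nu}$ Pauli strings, with a parenthetical justification (``the cardinality of $\mc{P}_N\backslash\mathrm{Stab}(U)$'') that is not literally correct---that set has $4^N-2^{2N-\nu}$ elements; the relevant number $2^{\nu}$ is the \emph{index} of the stabilizer subgroup---whereas your centralizer argument (every support element shares its commutation pattern with the preserved group $\mathcal{S}$, hence the support lies in a single coset of $\mathcal{Z}(\mathcal{S})$, whose order is $2^{\nu}$ by the symplectic-complement count) is a complete proof of precisely the fact the paper needs. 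For $\nu(U)\le\tau(U)$ the paper simply cites Ref.~\cite{Jiang2023}; you derive it from subadditivity of nullity. One correction there: $\mathcal{S}(AB)\cong\sigma_B(\mathcal{S}(B))\cap\mathcal{S}(A)$ is an overstatement---with $A=B^{\dagger}$ the left-hand side is all of $\mc{P}_N$ while the right-hand side need not be---but only the containment $\mathcal{S}(AB)\supseteq\sigma_B(\mathcal{S}(B))\cap\mathcal{S}(A)$ is needed, and together with $\dim(V\cap W)\ge\dim V+\dim W-2N$ it gives $\nu(AB)\le\nu(A)+\nu(B)$ and hence $\nu(U)\le\tau(U)$ as you claim. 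In short: the paper buys brevity by citing one inequality and asserting the counting fact; your route is longer but self-contained, and it supplies the rigorous coset/symplectic argument that the paper's appendix only gestures at.
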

\begin{hproof}
    The upper bound is proven through first bounding LOE by OSE, which is apparent from entanglement being the optimum participation entropy over all product bases (with $\mc{P}_N$ being one such basis). For the remaining two upper-bounds, we use a result from Ref.~\cite{Jiang2023} that $\nu(U)\leq \tau(U)$, and make a counting argument to show that ${M}^{(\alpha)}(O_U) \leq \nu(U)$. See App.~\ref{ap:upper_bound} for details.
\end{hproof}
Examining both the upper and lower bounds of LOE in Thms.~\ref{thm:main} and~\ref{thm:main_lower}, we can see that in the average case for large $N \gg \{ \nu,\tau \} \gg 1$, magic resources and {operator} entanglement approximately coincide. It is instructive to compare Eq.~\eqref{eq:main} to the equivalent quantities in the Schr\"odinger picture. Consider the examples of so-called magic states, $\ket{H}^{\otimes N}:=T^{\otimes N}\ket{+ + \dots +}$ and random stabilizer states, $\ket{\psi_{\mathrm{stab}}}:= C\ket{00\dots 0}$ for some deep circuit $C \sim \mathcal{C}_N$. In the first case, the aptly named magic state requires very high magic resources to construct (with a $T$-count and state nullity of $\tau(\ket{T}^{\otimes N})=\nu(\ket{T}^{\otimes N})=N$) but has zero entanglement. On the other hand, a random stabilizer state almost surely has near-maximal entanglement, yet requires no magic resources. The existence of the upper bound Eq.~\eqref{eq:main} in operator space is therefore striking, pointing towards a unified view of non-classical resources in the Heisenberg picture.

{We return to the {relation between LOE and many-body chaos. 
While random matrix theory spectral statistics is one of the most accepted signature of chaos (with some possible caveats, cf. Ref.~\cite{Farshi2023}), dynamical signatures are useful for studying earlier time scales and spatiotemporal correlations in locally interacting many-body systems. One such measure is the scaling in time of LOE: in a range of local many-body systems it has been observed to grow linearly strictly for non-integrable dynamics, while growing at-fastest logarithmically in integrable systems. Beyond numerics~\cite{Prosen2009,Jonay2018,Alba2021}, this conjecture has been confirmed analytically in free~\cite{Prosen2007,Prosen2007a,Dubail_2017}, interacting-integrable~\cite{Alba2019,Kos2020II}, and even chaotic~\cite{Kos2020} spin chains.} 
Moreover, linear LOE growth necessarily implies \emph{scrambling}, as quantified by the decay of out-of-time-ordered correlators (OTOCs)~\cite{dowling2023scrambling}.

We can reinterpret Thm.~\ref{thm:main} through this past literature on the sensitivity of LOE to many-body chaos. Namely, we deduce that the aforementioned non-integrable systems also %Namely, for any of the aforementioned systems, we deduce that non-integrable systems 
produce at-least $\mc{O}(t)$ non-Clifford resources, while the interacting integrable cases produce at-least $\mc{O}(\log(t))$ magic resources. Note that due to its inherent Lieb-Robinson light-cone, $\mc{O}(t)$ scaling of OSE is maximal for local dynamics~\cite{dowling2024magicheis}. Based on the discussed evidence~\cite{Prosen2007,Prosen2007a,Dubail_2017,Alba2019,Kos2020II,Prosen2009,Jonay2018,Alba2021,dowling2023scrambling}, {generic non-integrable spin-chain dynamics} {produce} at-least linearly growing LOE, OSE, nullity and $T$-count. It is worth comparing this claim to the results of Refs.~\cite{Haferkamp_2022,Leone2021quantumchaosis}. There, it was found that for the $T$-doped Clifford ensemble (cf. Thm.~\ref{thm:main_lower}), $\mc{O}(N)$ $T$-gates are both necessary~\cite{Haferkamp_2022} and sufficient~\cite{Leone2021quantumchaosis}
to reproduce Haar values for higher-order OTOCs. However, these results are valid only on-average over the $T$-doped ensemble, and so leave open the question of what to expect for deterministic many-body dynamics.
Viewing Thm.~\ref{thm:main} in the context of the LOE conjecture complements these results: it assures us that for locally interacting many-body systems, in fact, quantum chaos really cannot be simulated classically.}

% In the following result, we therefore adopt the linear scaling of LOE as the definition of quantum chaos.} 
% % This allows us to interpret Thm.~\ref{thm:main} via the necessary growth of OSE, nullity, and $T$-count under chaotic evolution. 
% \begin{cor} \label{cor:chaos}
%     To simulate any chaotic Hamiltonian or Floquet $U_t$ parametrized by time $t$,
%     % {where we define `chaos' according to the conjecture of Ref.~\cite{Prosen2007},} 
%     one requires $\mc{O}(t)$ non-Clifford resources. 
% \end{cor}
% % {In particular, we can immediately deduce the (at-least) linear growth of magic resources in the chaotic spin chains studied in Refs.~\cite{Prosen2009,Dubail_2017,Jonay2018,Kos2020,Alba2021}.}

To {unpack} the preceding statement, from Thm.~\ref{thm:main} we can deduce a hierarchy of computational complexities using inequivalent methods. We consider three prominent techniques in many-body dynamical simulation: the tensor network method of Heisenberg picture time-evolving block decimation (H-TEBD)~\cite{Hartmann2009} has a computational cost that scales (exponentially) with LOE~\cite{Verstraete2006,Schuch_2008}; stabilizer methods scale (exponentially) with $T$-count according to the Gottesmann-Knill theorem~\cite{gottesman1998,Aaronson2004,Bravyi2019simulationofquantum,Pashayan2022}; and Pauli truncation methods~\cite{Rakovszky2022,lloyd2023ballisticdif,Chan2024,srivatsa2024prob,begusic2024realtime,schuster2024polynomialt} 
have an expense bounded (exponentially) by the OSE~\cite{dowling2024magicheis}.
Note that these resource costs being large do not preclude the efficient simulation of certain features of a system to polynomial precision, such as the anti-concentration of local observables being well-approximated by either: Pauli-truncating an operator to low OSE~\cite{angrisani2024classically}, or analogously using area-law random tensor-networks~\cite{cheng2024pseudoentanglementtensornetworks}. In the following, if the (exponential) resource cost $R$ of a simulation method for some one-parameter unitary $U_t$ scales extensively with time [layers] $t$, $R(U_t) \sim \mc{O}(t)$, we say that the dynamics [circuit] is \emph{non-simulable} with respect to the said method.
\begin{cor} \label{cor:sim}
    If $U_t$ is non-simulable according to H-TEBD, then it is also necessarily non-simulable using Pauli truncation, {which is} further necessarily non-simulable using stabilizer methods.
\end{cor}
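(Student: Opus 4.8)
The plan is to read the corollary off directly from the inequality chain of Thm.~\ref{thm:main}, once the exponential resource cost of each simulation method is matched to its governing monotone. First I would record the correspondences from the cited literature: the bond dimension (hence the exponential cost) of H-TEBD applied to $O_{U_t}$ is controlled by its local-operator entanglement $E^{(\alpha)}_A(O_{U_t})$~\cite{Verstraete2006,Schuch_2008,Hartmann2009}; the cost of Pauli truncation is bounded exponentially in the operator stabilizer entropy $M^{(\alpha)}(O_{U_t})$~\cite{dowling2024magicheis}; and the cost of stabilizer (Gottesman--Knill) simulation scales exponentially in the $T$-count $\tau(U_t)$~\cite{gottesman1998,Aaronson2004,Bravyi2019simulationofquantum,Pashayan2022}. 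In each case the relevant resource $R$ is the exponent, and by the stated definition the method is non-simulable precisely when this exponent grows extensively, $R(U_t)\sim\mc{O}(t)$.

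Second, I would invoke Thm.~\ref{thm:main}, which holds for every $t$, every \rey index $\alpha\geq 0$, and every bipartition:
\begin{equation*}
    E^{(\alpha)}_A(O_{U_t}) \leq M^{(\alpha)}(O_{U_t}) \leq \nu(U_t) \leq \tau(U_t).
\end{equation*}
The two implications then follow term-by-term. If H-TEBD is non-simulable, $E^{(\alpha)}_A(O_{U_t})$ is extensive for the cut and index that set the bond dimension; the first inequality forces $M^{(\alpha)}(O_{U_t})\geq E^{(\alpha)}_A(O_{U_t})\gtrsim t$, so Pauli truncation is non-simulable. The remaining inequalities then force $\tau(U_t)\geq\nu(U_t)\geq M^{(\alpha)}(O_{U_t})\gtrsim t$, so stabilizer simulation is non-simulable. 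Since $\nu$ and $\tau$ are operator-independent, the conclusion is moreover insensitive to the choice of $O$.

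The one point that needs care---rather than a genuine obstacle---is the mismatch between the quantity that literally controls each cost and the quantity appearing in Thm.~\ref{thm:main}. For H-TEBD the cost is governed by the maximal operator entanglement over all cuts, $\max_A E^{(\alpha)}_A$, so I would note that because the bound $E^{(\alpha)}_A \leq M^{(\alpha)}$ holds for \emph{every} $A$ it holds in particular for the maximizing cut $A^\ast$; likewise the freedom in $\alpha$ is absorbed because Thm.~\ref{thm:main} is uniform in $\alpha$. Finally, to upgrade these one-sided bounds to the ``$\sim\mc{O}(t)$'' extensive designation for local dynamics, I would appeal to the Lieb--Robinson light-cone, under which $M^{(\alpha)}(O_{U_t})=\mc{O}(t)$ is already the maximal possible growth~\cite{dowling2024magicheis}; combined with $M^{(\alpha)}\geq E^{(\alpha)}_{A^\ast}\gtrsim t$ this pins the OSE (and, for a circuit of fixed width, the $T$-count) to $\Theta(t)$, matching the definition. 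No further estimation is required: the entire content is the transitive propagation of extensive growth up the monotone hierarchy of Thm.~\ref{thm:main}.
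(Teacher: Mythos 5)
Your proposal is correct and takes essentially the same route as the paper: the paper offers no separate proof of Cor.~\ref{cor:sim}, treating it as an immediate consequence of the inequality chain in Thm.~\ref{thm:main} combined with the cost--monotone correspondences (H-TEBD with LOE, Pauli truncation with OSE, stabilizer methods with $T$-count) recorded in the text just before the corollary. Your additional care about the maximizing cut, uniformity in $\alpha$, and using the Lieb--Robinson light-cone to pin the OSE growth to $\Theta(t)$ simply makes explicit what the paper notes in passing (that $\mc{O}(t)$ OSE scaling is maximal for local dynamics), so no substantive difference remains.
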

{This result means that if} $U_t$ is efficiently simulable according to either stabilizer or Pauli truncation methods, then \emph{it must also be efficient to model using H-TEBD}.

\textit{Conclusion.---}
The operator entanglement of the Heisenberg operator $O_U$ has long been conjectured as a faithful {signature of many-body chaos in locally-interacting systems}, alongside its direct interpretation as tensor network simulability of the dynamics.
Here, we have shown through~Thm.~\ref{thm:main} that this quantity is limited by the minimum of $U$'s entangling capacity and its magic.
The implications here are two-fold: we concretely bridge two seemingly discordant resources of fundamental importance through their {dependence on LOE}. But by extension, our results also integrate a new viewpoint on the non-simulability of complex quantum dynamics. 

We emphasize, however, that this work does not rule out the possibility of classically simulating volume-law LOE through other means. For example, matchgate circuits constitute an alternate class of classically tractable systems. Here, Gaussian observables of free-fermionic systems can be computed efficiently ~\cite{Jozsa2008,Reardon_Smith_2024}. Moreover, the relevant class has been observed to display both high entanglement and magic~\cite{collura2024quantummagicfermionicgaussian}. Nevertheless, we conjecture that a similar relation to the upper bound of Thm~\ref{thm:main} will hold in this setting: that LOE can be bounded by the non-Gaussian resources~\cite{Zhuang_2018,Bu_2024,ermakov2024unifiedf} employed. This intuition is bolstered by the observation that LOE grows at fastest logarithmically with time for free-fermionic Hamiltonian evolution~\cite{Prosen2007,Prosen2007a,Dubail_2017}.

Beyond strict limits, it is important to explore further the behavior of LOE in concrete settings.
One way to understand the average-case saturation in Thm.~\ref{thm:main} is that typical dynamics are maximally entangling. This makes magic the bottleneck, and thus essentially equivalent to the LOE. Less typical scenarios should also be studied, where the relative amounts of each are on equal footing. Appropriate methods on this front include those combining stabilizer and tensor network principles~\cite{masotllima2024,fux_disentangling_2024,qian_augmenting_2024,nakhl2024stabilizer}, suitable for certain systems with an intermediate amount of magic resource. This setting may be well suited to studying LOE in practice, and already interesting regimes have been numerically identified, such as the efficient simulation of $T$-doped Clifford circuits, where $\tau \lesssim N$~\cite{fux_disentangling_2024,nakhl2024stabilizer,huang2024nonstabil}.
    Such an intermediate-amount of magic is also particularly interesting in the context of unitary nullity; cf. Eq.~\eqref{eq:nu-compress-U} and the operational power of $\nu$-compressible states studied in Ref.~\cite{gu2024magicinduced}. Insight on this front may be gleaned from determining classes of unitaries which saturate the respective bounds on compressibility described below Eq.~\eqref{eq:nu-compress-U}. Understanding such restricted complexity dynamical systems will offer further insight into quantum randomness~\cite{Haferkamp_2022,Leone2021quantumchaosis}, simulability~\cite{fux_disentangling_2024,nakhl2024stabilizer}, learnability~\cite{Leone2021choiSRE,Leone2024learning,gu2024magicinduced}, and integrability~\cite{turkeshi2024magicspreadingrandomquantum,tirrito2024anticoncentr,odavi2024stabiliz,dowling2024magicheis}.

Deeply intertwined with the question of dynamical simulability is also that of unitary complexity~\cite{Nielsen2006}.
It is clear that pure-entanglement or pure-magic measures cannot accommodate the linear growth of circuit complexity up to exponential time~\cite{Haferkamp2022}. 
{We lastly remark that our results indicate that LOE may lead to better lower bounds on circuit complexity, given its simultaneous sensitivity to both entanglement and magic.} Exploring generalized extensions of the LOE {to study this observation} is fertile ground for future work.

\twocolumngrid
\begin{acknowledgments}
    The authors thank Pavel Kos, Lorenzo Leone, and Xhek Turkeshi for useful discussions and comments on the manuscript. ND acknowledges funding by the Deutsche Forschungsgemeinschaft (DFG, German Research Foundation) under Germany’s Excellence Strategy - Cluster of Excellence Matter and Light for Quantum Computing (ML4Q) EXC 2004/1 - 390534769. GALW is supported by an Alexander von Humboldt Foundation research fellowship. 
\end{acknowledgments}

%merlin.mbs apsrev4-1.bst 2010-07-25 4.21a (PWD, AO, DPC) hacked
%Control: key (0)
%Control: author (8) initials jnrlst
%Control: editor formatted (1) identically to author
%Control: production of article title (-1) disabled
%Control: page (0) single
%Control: year (1) truncated
%Control: production of eprint (0) enabled
%

% \bibliography{references}

\newpage

\onecolumngrid
% \appendix

\newpage

\onecolumngrid
\appendix

\section*{Supplemental Material}

\tableofcontents

{
\section{Relation between local-operator entanglement and state entanglement} \label{ap:oe}
{
In this section, we will review the relation between local-operator entanglement (LOE), and operator entanglement of the time evolution operator $U$~\cite{Zanardi2001}. Through this relation, we can then deduce the relationship between LOE and entanglement generation of quantum states. 

Operator entanglement of $U$ is defined in the same way as $O_U$: through the entanglement of the pure Choi state $|U\rangle \! \rangle = U \otimes \id \ket{\phi^+}$; see the discussion above Eq.~\eqref{eq:main1}. Despite a similar formalism in their constructions, LOE and operator entanglement of $U$ behave very differently. For instance, operator entanglement of $U$ typically grows fast for free and non-integrable dynamics alike, unless a system is in a many-body localized phase~\cite{Zhou2017,Dubail_2017}; see the discussion after Thm.~\ref{thm:main} on the behavior of LOE in many-body systems. Moreover, Clifford unitaries may have high operator entanglement, unlike the constant-bounded LOE of Clifford-conjugated initially local operators. We first point out that a product unitary $U$ may generate no LOE. 
\begin{prop}
    Consider a bipartition $\mc{H}=\mc{H}_A \ot \mc{H}_{\bar{A}}$ and some initial operator $O$. Then if for $U_A,\,U_B$ are unitaries acting on $\mc{H}_A$ and $\mc{H}_B$ respectively, 
    \begin{equation}
        E^{(\alpha)}_{A}\left((U_A^\dagger \otimes U_B^\dagger)O(U_A \otimes U_B)\right) = E^{(\alpha)}_{A}(O).
    \end{equation}
\end{prop}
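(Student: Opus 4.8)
The plan is to use the Choi--Jamio\l kowski isomorphism to convert the conjugation $O \mapsto (U_A^\dagger \otimes U_B^\dagger)\,O\,(U_A \otimes U_B)$ into a unitary that is \emph{local} with respect to the very bipartition defining the LOE. Since local unitaries leave the entanglement spectrum---and hence every R\'enyi entropy---invariant, the claim follows immediately. The whole argument is a bookkeeping exercise in how a product unitary acts after vectorization.

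Concretely, I would first recall the standard vectorization identity $|L O R\rangle\!\rangle = (L \otimes R^T)|O\rangle\!\rangle$, which follows from $(R \otimes \id)\ket{\phi^+} = (\id \otimes R^T)\ket{\phi^+}$ inserted into the Choi map. Applying this with $L=U^\dagger$ and $R=U$ for the product $U = U_A \otimes U_B$, and using $(U_A \otimes U_B)^T = U_A^T \otimes U_B^T$, gives
\begin{equation}
    |O_U\rangle\!\rangle = \left(U_A^\dagger \otimes U_B^\dagger \otimes U_A^T \otimes U_B^T\right)|O\rangle\!\rangle,
\end{equation}
where the four factors are ordered as (ket-$A$, ket-$\bar{A}$, bra-$A$, bra-$\bar{A}$), and both $U^\dagger$ and $U^T$ are unitary whenever $U$ is. The second step is simply to regroup these factors according to the LOE bipartition, which pairs the two ``$A$-copies'' (ket and bra) against the two ``$\bar{A}$-copies''. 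Under this regrouping the operator above becomes $W_A \otimes W_{\bar{A}}$, with $W_A = U_A^\dagger \otimes U_A^T$ acting on the doubled $A$ space and $W_{\bar{A}} = U_B^\dagger \otimes U_B^T$ acting on the doubled $\bar{A}$ space; each is unitary as a tensor product of unitaries.

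The proof then concludes by invariance of the spectrum under local unitaries. Tracing out the doubled complement and using that $W_{\bar{A}}$ can be absorbed inside the partial trace over $\bar{A}$, one finds $\tr_{\bar{A}}[|O_U\rangle\!\rangle\!\langle\!\langle O_U|] = W_A \, \tr_{\bar{A}}[|O\rangle\!\rangle\!\langle\!\langle O|]\, W_A^\dagger$. The two reduced states are therefore unitarily equivalent and share the same spectrum, so $S^{(\alpha)}$ is unchanged for every R\'enyi index $\alpha$, yielding $E^{(\alpha)}_A(O_U) = E^{(\alpha)}_A(O)$.

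The only real care needed---and the step I expect to be the main (if modest) obstacle---is the bookkeeping of tensor-factor orderings: one must verify that after vectorization the transpose acts \emph{within} each subsystem, so that a unitary that factorizes across $A:\bar{A}$ maps to a unitary that factorizes across the \emph{doubled} bipartition. This hinges on defining the transpose and the maximally entangled state $\ket{\phi^+}$ with respect to a product basis compatible with $\mc{H} = \mc{H}_A \otimes \mc{H}_{\bar{A}}$, which is the natural convention; once this is fixed the factorization $W_A \otimes W_{\bar{A}}$ is immediate, and everything else reduces to the standard invariance of R\'enyi entanglement under local unitaries.
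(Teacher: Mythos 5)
Your proof is correct and takes essentially the same route as the paper's: both map the conjugation to the Choi state, observe that a product unitary $U_A\otimes U_B$ acts on $|O\rangle\!\rangle$ as a unitary that factorizes across the doubled bipartition $A:\bar{A}$ (your $W_A\otimes W_{\bar{A}}$, the paper's $(U_A^*\otimes U_A)\otimes(U_B^*\otimes U_B)$, differing only in vectorization convention), and conclude by invariance of R\'enyi entanglement under local unitaries. Your write-up simply makes explicit the vectorization identity and the reduced-state unitary equivalence that the paper leaves implicit.
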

\begin{proof}
    Under the state-operator mapping, for $O_U= (U_A^\dagger \otimes U_B^\dagger)O(U_A \otimes U_B)$ we have that 
    \begin{equation}
        | O_U \rangle \! \rangle = (U_A^* \otimes U_A ) \otimes ((U_B^* \otimes U_B ) ) |O \rangle \! \rangle 
    \end{equation}
    where the tensor product on the right-hand side is between the spatial bipartition of doubled Hilbert spaces labeled as $A$ and $\bar{A}$. As $(U_A^* \otimes U_A ) \otimes ((U_B^* \otimes U_B )$ is a product unitary across this bipartition, the entanglement of $|O \rangle \! \rangle $ cannot change.
\end{proof}
Beyond this, in order for a Heisenberg operator $O_U$ to have some scaling of LOE, it must also have a minimum amount of operator entanglement of $U$. We prove this for $\alpha=0$ R\'enyi entropy (i.e., logarithm of the Schmidt rank), but analogous results also hold for other R\'enyi entropy operator entanglements. 
\begin{prop} \label{prop:5}
    Consider a bipartition $\mc{H}=\mc{H}_A \ot \mc{H}_{\bar{A}}$ and initial operator $O\in \mc{P}_N$. Then if $E^{(0)}_{A}(O_U)=\log(r)$ for $O_U = U^\dagger O U$ and for some constant $r$, then necessarily also,
    \begin{equation}
        E^{(0)}_{A}(U) \geq  \frac{1}{2}\log(r).
    \end{equation}
\end{prop}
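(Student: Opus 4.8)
The plan is to reduce everything to the \emph{operator Schmidt rank} across $A:\bar A$ and then exploit its submultiplicativity. First I would establish the dictionary: for any operator $X$ on $\mc{H}$, writing its operator Schmidt decomposition as $X = \sum_i \lambda_i A_i \otimes B_i$ with $\{A_i\}$, $\{B_i\}$ Hilbert--Schmidt orthonormal, the Choi state factorizes as $|X\rangle\!\rangle = \sum_i \lambda_i |A_i\rangle\!\rangle \otimes |B_i\rangle\!\rangle$, where $|A_i\rangle\!\rangle$ lives in the two copies of $A$ and $|B_i\rangle\!\rangle$ in the two copies of $\bar A$. Since $\langle\!\langle A_i|A_j\rangle\!\rangle \propto \tr[A_i^\dagger A_j] = \delta_{ij}$, this is already a Schmidt decomposition in the doubled space, so the Schmidt rank of $|X\rangle\!\rangle$ across $A:\bar A$ equals the operator Schmidt rank $\mathrm{osr}_A(X)$. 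In particular $E^{(0)}_A(O_U) = \log \mathrm{osr}_A(O_U)$ and $E^{(0)}_A(U) = \log \mathrm{osr}_A(U)$, so the claim becomes $\mathrm{osr}_A(U) \geq \sqrt{r}$ whenever $\mathrm{osr}_A(O_U)=r$.

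Next I would collect three elementary properties of $\mathrm{osr}_A$. (i) Submultiplicativity: if $X = \sum_i a_i \otimes b_i$ and $Y = \sum_j c_j \otimes d_j$ are minimal operator Schmidt decompositions, then $XY = \sum_{i,j}(a_i c_j)\otimes(b_i d_j)$ exhibits $XY$ as a sum of $\mathrm{osr}_A(X)\,\mathrm{osr}_A(Y)$ product terms, so $\mathrm{osr}_A(XY) \leq \mathrm{osr}_A(X)\,\mathrm{osr}_A(Y)$. (ii) Adjoint-invariance: taking $\dagger$ of $U=\sum_i\lambda_i A_i\otimes B_i$ preserves the number of terms and orthonormality, so $\mathrm{osr}_A(U^\dagger)=\mathrm{osr}_A(U)$. (iii) Pauli strings factorize: since $O\in\mc{P}_N$ is a tensor product of single-qubit Paulis, it splits as $O=O_A\otimes O_{\bar A}$ across any qubit bipartition, giving $\mathrm{osr}_A(O)=1$.

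Combining these on $O_U = U^\dagger O U$ yields $\mathrm{osr}_A(O_U) \leq \mathrm{osr}_A(U^\dagger)\,\mathrm{osr}_A(O)\,\mathrm{osr}_A(U) = \mathrm{osr}_A(U)^2$. Hence $r = \mathrm{osr}_A(O_U) \leq \mathrm{osr}_A(U)^2$, so $\mathrm{osr}_A(U)\geq\sqrt r$, and taking logarithms gives $E^{(0)}_A(U)=\log\mathrm{osr}_A(U)\geq \tfrac12\log r = \tfrac12 E^{(0)}_A(O_U)$, as claimed.

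The only step that is more than bookkeeping is the submultiplicativity (i) combined with $\mathrm{osr}_A(O)=1$: together these convert the multiplicative sandwich $U^\dagger O U$ into the factor of two relating the two operator entanglements. The doubled-space identification in the first step is routine but must be arranged so the bipartition groups the two $A$-copies together (and the two $\bar A$-copies together); adjoint-invariance is immediate. I would emphasize that this argument is genuinely specific to $\alpha=0$, since it controls only the \emph{number} of Schmidt coefficients; the analogous bounds for $\alpha>0$ alluded to in the surrounding text would instead require controlling the magnitudes of the Schmidt coefficients of $U^\dagger O U$ in terms of those of $U$, which does not follow from rank submultiplicativity alone.
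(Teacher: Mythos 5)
Your proposal is correct and follows essentially the same route as the paper: both expand $U$ in its operator Schmidt decomposition across $A:\bar{A}$, use that the Pauli string $O$ is a product operator, and count at most $\mathrm{osr}_A(U)^2$ product terms in $U^\dagger O U$ to conclude $E^{(0)}_A(O_U)\leq 2E^{(0)}_A(U)$. The only cosmetic difference is that the paper phrases this as a proof by contradiction, whereas you argue directly via submultiplicativity of the operator Schmidt rank (and your closing remark that the argument controls only the number, not the magnitudes, of Schmidt coefficients is a fair caveat the paper glosses over when asserting analogues for $\alpha>0$).
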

\begin{proof}
    Assume that $E^{(0)}_{A}(O_U)=\log(r)$, and assume the converse relation that $ E^{(0)}_{A}(U)=\log(r') <   \frac{1}{2}\log(r)$. We write the matrix-product operator representation of the unitary evolution as $U=\sum_i^r a_i A_i \otimes B_i$. Then from the contraction of $O$ with $U$ and $U^\dagger$, we have  
    \begin{equation}
        O_U = \sum_{i,j}^{r'} a_i^* a_j (A_i \otimes B_i)^{\dagger} (O_A \ot O_B) (A_j \otimes B_j) = \sum_{\ell=1}^{(r')^2} c_\ell \tilde{A}_\ell \otimes \tilde{B}_\ell  \label{eq:mpoProof}
    \end{equation}
    where we recall that $O:= O_A \otimes O_B \in \mc{P}_N$ is product across $A:\bar{A}$, and we have defined a combined index $\ell$. The definitions of other terms in Eq.~\eqref{eq:mpoProof} are clear from context, and we find that the (maximum) bond dimension of $O_U$ is $(r')^2$. This means that $E^{(0)}_{A}(O_U) \leq 2 \log(r') < \log(r)$, which is a contradiction of our initial assumption that $E^{(0)}_{A}(O_U)=\log(r)$. Therefore, if $E^{(0)}_{A}(O_U)=\log(r)$ we know that $E^{(0)}_{A}(U)\geq \log(r)$.
\end{proof}
We are therefore assured that the LOE is sensitive to the operator entanglement of $U$. Finally, we recall the relation between operator entanglement and \textit{entangling power}~\cite{Zanardi2001}, which is the average entanglement generated by a unitary $U$ across random product state inputs~\cite{Zanardi2000},{
\begin{align}
    &e_{P}^{(\alpha)}(U) := \int_{U_A \sim \, \mc{U}_{N_A},U_{\bar{A}} \sim \,  \mc{U}_{N_{\bar{A}}}} \!E^{(\alpha)}_A\left( U(U_A \otimes U_{\bar{A}}) \ket{0}^{\otimes N} \right) \nn\\
    &\,\underset{N_A=N_{\bar{A}}}{=} \frac{D^2}{(D+1)^2} \left(E^{(\alpha)}(U) + E^{(\alpha)}(U \mathbb{S}) - \frac{D^2-1}{D^2} \right),
\end{align}}
where $\mathbb{S}$ is the SWAP unitary between subsystems $\mc{H}_A$ and $\mc{H}_{\bar{A}}$, and the final equality is valid for an equal-sized bipartition. {Through abuse of notation, $E_A^{(\alpha)}(\ket{\psi})$ means the $\alpha$-R\'enyi entanglement entropy of the \textit{state} $\ket{\psi}$ across the bipartition $\mc{H}_A \otimes \mc{H}_{\bar{A}}$}. From this relation together with Prop.~\ref{prop:5}, we can immediately deduce that a unitary which produces a large LOE, also necessarily on-average generates a large amount of entanglement of states. This therefore relates finite LOE to state entanglement, cementing its role as a `bridge' between state entanglement and magic resources. 

}

 \section{Upper-bounding LOE by magic monotones} \label{ap:upper_bound}
 We will here prove the upper bounds of Thm.~\ref{thm:main}. The strategy is to upper-bound the OSE by the unitary nullity, then use that nullity lower-bounds the $T$-count, before proving that OSE further always lower-bounds the LOE of the same R\'enyi index. 
 
We recall the definition of unitary nullity for a given $U$, namely: a unitary $U$ stabilizes $2^{2N-\nu}$ Pauli strings, mapping them to Paulis. This subgroup of Pauli strings is called the unitary stabilizer group of $U$, $\mathrm{Stab}(U)$. Using these definitions, we can immediately relate unitary nullity to the OSE.
\begin{prop} \label{prop:nu-bound}
    For any initial operator $O \in \mathcal{P}_N$ and unitary $U$, \begin{equation}
        {M}^{(\alpha)} (O_U) \leq \nu(U).  \label{eq:nu_bound}
    \end{equation}
\end{prop}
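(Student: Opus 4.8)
The plan is to reduce the claimed inequality to a counting statement about the Pauli support of $O_U$, and then bound that support by $2^{\nu(U)}$ using the algebraic structure of the unitary stabilizer group. First I would observe that, since $O\in\mc{P}_N$ is a Pauli string, $O_U=U^\dagger O U$ satisfies $\tr{O_U^2}=D$, so the coefficients $c_P:=D^{-1}\tr{O_U P}$ make $\{c_P^2\}_{P\in\mc{P}_N}$ a genuine probability distribution over Pauli strings, and $M^{(\alpha)}(O_U)$ is precisely its classical $\alpha$-R\'enyi entropy. Because R\'enyi entropies are non-increasing in $\alpha$ and $S^{(0)}$ equals the (base-$2$) logarithm of the support size, it follows that for every $\alpha\ge 0$,
\begin{equation}
    M^{(\alpha)}(O_U)\ \le\ \log_2\big|\{P\in\mc{P}_N : \tr{O_U P}\neq 0\}\big|.
\end{equation}
Thus it suffices to show that the number of nonzero Pauli coefficients of $O_U$ is at most $2^{\nu(U)}$.

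Next I would exploit the unitary stabilizer group $\mc{G}=\{P\in\mc{P}_N : U^\dagger P U\in\pm\mc{P}_N\}$, which by the definition recalled above $\eqref{eq:nu_bound}$ has $|\mc{G}|=2^{2N-\nu(U)}$. For any $P\in\mc{G}$ its Heisenberg image $P':=U^\dagger P U$ is again a Pauli string, and conjugating $O_U$ by it gives (the $\pm$ ambiguity in $P'$ cancels in conjugation)
\begin{equation}
    P' O_U (P')^\dagger = U^\dagger\big(P O P^\dagger\big) U = \eta_P\, O_U,
\end{equation}
where $\eta_P=\pm 1$ is the global commutation sign of $P$ with $O$. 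Expanding $O_U=\sum_Q c_Q Q$ and matching commutation signs term by term forces $\epsilon(P',Q)=\eta_P$ for every $Q$ in the support, where $\epsilon(P',Q)\in\{\pm1\}$ is the Pauli commutation sign (well defined in $\{\pm1\}$ once one works in $\mc{P}_N=\tilde{\mc{P}}_N/\{\pm i\,\id\}$). Consequently any two supported Paulis $Q_1,Q_2$ commute identically with every element of $\mc{G}':=U^\dagger\mc{G}\,U$, i.e. $Q_1 Q_2$ lies in the centralizer $\mc{C}(\mc{G}')$, so the whole support is contained in a single coset of $\mc{C}(\mc{G}')$.

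To finish I would invoke the symplectic self-duality of the Pauli group: identifying $\mc{P}_N$ with a $2N$-dimensional symplectic vector space over $\mathrm{GF}(2)$ whose form is the commutation sign, the centralizer of a subgroup is its symplectic complement, so $\dim\mc{C}(\mc{G}')=2N-\dim\mc{G}'$. Since conjugation by $U$ is a bijective homomorphism, $|\mc{G}'|=|\mc{G}|=2^{2N-\nu(U)}$, whence $|\mc{C}(\mc{G}')|=2^{\nu(U)}$. Therefore the support of $O_U$ has at most $2^{\nu(U)}$ elements, and combining with the first display yields $M^{(\alpha)}(O_U)\le\nu(U)$.

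The main obstacle — and the only genuinely nontrivial step — is establishing that conjugation by each Heisenberg-image Pauli $P'$ acts on the \emph{entire} operator $O_U$ as a single global sign $\eta_P$; once this is in hand, the confinement of the support to one coset of the centralizer and the symplectic dimension count are routine. Minor points to handle carefully are the phase bookkeeping that makes commutation signs unambiguous, and confirming that the $\alpha=0$ max-entropy bound dominates all $\alpha\ge 0$.
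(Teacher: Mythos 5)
Your proof is correct, and at the top level it follows the same counting strategy as the paper's own argument: reduce $M^{(\alpha)}(O_U)$ to the $\alpha=0$ (support-size) entropy, and show that the Pauli support of $O_U$ contains at most $2^{\nu(U)}$ strings. The difference is in how the support bound is justified. The paper essentially asserts it: it states that $O_U$ is a superposition of at most $2^{\nu}$ terms and parenthetically attributes this to ``the cardinality of $\mc{P}_N\backslash\mathrm{Stab}(U)$'', which read literally is not even the right number (that set difference has $4^N-2^{2N-\nu}$ elements; the intended quantity is the index $|\mc{P}_N|/|\mathrm{Stab}(U)|=2^{\nu}$), and no mechanism is given for why the support cannot exceed it. You supply precisely that mechanism: conjugation by each Heisenberg image $P'=U^\dagger P U$ with $P\in\mathrm{Stab}(U)$ acts on the whole of $O_U$ as a single global sign $\eta_P$, so every Pauli in the support must share the same commutation pattern with the subgroup $\mc{G}'=U^\dagger\,\mathrm{Stab}(U)\,U$; this confines the support to one coset of the centralizer $\mc{C}(\mc{G}')$, and the symplectic-complement dimension count over $\mathrm{GF}(2)$ gives $|\mc{C}(\mc{G}')|=2^{2N}/|\mc{G}'|=2^{\nu}$. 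This is exactly the lemma needed to make the paper's one-line counting argument rigorous, so your write-up is strictly more complete at the only nontrivial step; the remaining ingredients (normalization of $\{c_P^2\}$ to a probability distribution, monotonicity of R\'enyi entropies in $\alpha$, and the max-entropy bound by the logarithm of the support size) coincide with the paper's.
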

\begin{proof}
    Taking $O \in \mathcal{P}_N$, we consider the conjugate action of $U$
   $O_U = U^\dagger O U$. To derive an upper bound, we consider the `worst case' of $O \in \mc{P}_N\backslash \mathrm{Stab}(U)$. Otherwise, $O_U \in \mc{P}_N$ and OSE is trivially zero. In the non-trivial case, the output $O_U$ is a superposition of at most $2^\nu$ terms (the cardinality of $\mc{P}_N\backslash \mathrm{Stab}(U)$). The highest entropy of the square coefficients of a superposition of $2^\nu$ terms is ${M}^{(\alpha)} (O_U) = \log(2^\nu) = \nu$, corresponding to a uniform superposition. From this optimal case, the bound of Eq.~\eqref{eq:nu_bound} follows directly.
\end{proof}
This result may be of independent interest, as it provides further evidence of the strength of the OSE metric for quantifying the magic of a system. 

Recalling that $\nu(U) \leq \tau(U)$~\cite{Jiang2023}, we have that OSE is always the smallest of the three magic monotones studied in this work (see Table~\ref{tab:montotones}), it remains to prove the relation between OSE and LOE.
\begin{prop}
     \label{prop:osreBound}
    For any R\'enyi index $\alpha$ and across any bipartition $A:\bar{A}$, local-operator entanglement bounds from below the OSE,
    \begin{equation}
        E_A^{(\alpha)}({O_U}) \leq M^{(\alpha)}(O_U).
    \end{equation}
\end{prop}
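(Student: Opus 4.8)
The plan is to recognize the OSE as nothing but the participation entropy of the Choi state $|O_U\rangle\!\rangle$ in the (product) Pauli basis, and then apply the variational fact that entanglement entropy is the smallest participation entropy over all product bases. First I would expand $O_U$ in the Pauli basis and pass to the doubled space. The Pauli--Choi vectors $|P\rangle\!\rangle := (P\otimes\id)\ket{\phi^+}$ are orthonormal, $\langle\!\langle P|Q\rangle\!\rangle = \tr[P^\dagger Q]/D = \delta_{PQ}$, and crucially for $P = P_A \otimes P_{\bar{A}}$ with $P_A\in\mc{P}_{N_A}$, $P_{\bar{A}}\in\mc{P}_{N_{\bar{A}}}$ they factorize, $|P\rangle\!\rangle = |P_A\rangle\!\rangle\otimes|P_{\bar{A}}\rangle\!\rangle$, across the same doubled bipartition $A\!:\!\bar{A}$ that defines the partial trace. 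Writing $c_{P_A P_{\bar{A}}} = \tr[O_U (P_A\otimes P_{\bar{A}})]/D = \langle\!\langle P_A\otimes P_{\bar{A}}|O_U\rangle\!\rangle$, the OSE distribution is exactly the set of squared amplitudes $\{c_{P_A P_{\bar{A}}}^2\}$ of $|O_U\rangle\!\rangle$ in this orthonormal product basis (normalized since $\tr[O_U^2]/D = 1$), so that $M^{(\alpha)}(O_U) = S^{(\alpha)}(\{c_{P_A P_{\bar{A}}}^2\})$.

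Next I would pin down the two probability vectors to compare. By definition the LOE $E_A^{(\alpha)}(O_U)$ is the $\alpha$-R\'enyi entropy of the Schmidt spectrum $\{\lambda_k\}$ of $\rho_A = \tr_{\bar{A}}[|O_U\rangle\!\rangle\!\langle\!\langle O_U|]$. The $A$-marginal of the OSE distribution, $p_{P_A} := \sum_{P_{\bar{A}}} c_{P_A P_{\bar{A}}}^2$, is precisely the diagonal of $\rho_A$ in the orthonormal basis $\{|P_A\rangle\!\rangle\}$. I then chain two standard majorization relations by transitivity: (i) coarse-graining the $\bar{A}$ index concentrates mass, giving $\{c_{P_A P_{\bar{A}}}^2\}\prec\{p_{P_A}\}$; and (ii) the Schur--Horn theorem, that the diagonal of a Hermitian operator is majorized by its eigenvalues, gives $\{p_{P_A}\}\prec\{\lambda_k\}$. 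Composing yields $\{c_{P_A P_{\bar{A}}}^2\}\prec\{\lambda_k\}$.

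Finally I would invoke Schur-concavity of the $\alpha$-R\'enyi entropy, valid for every $\alpha\geq 0$: $x\prec y$ implies $S^{(\alpha)}(x)\geq S^{(\alpha)}(y)$. Applied to the majorization just obtained, this gives $M^{(\alpha)}(O_U) = S^{(\alpha)}(\{c^2\}) \geq S^{(\alpha)}(\{\lambda_k\}) = E_A^{(\alpha)}(O_U)$, which is the claim.

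The factorization of the Pauli--Choi basis and the two majorization facts are routine; the genuine crux — and the step I would write most carefully — is the identification of the marginal $p_{P_A}$ with the Pauli-basis diagonal of $\rho_A$, since this is where the doubled-space bookkeeping and the product structure of the Paulis actually enter and make the state-space majorization arguments applicable. A secondary technical care-point is making Schur-concavity airtight across the whole range $\alpha\geq 0$: for $\alpha\in(0,1)$ the entropy is a concave symmetric function of the probabilities, for $\alpha>1$ one uses that $p\mapsto\sum_i p_i^\alpha$ is Schur-convex (so its $\tfrac{1}{1-\alpha}\log$ is Schur-concave), and the endpoints $\alpha = 0,1,\infty$ follow from the Hartley, Shannon, and min-entropy limits, each separately Schur-concave. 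I would also remark that the argument never uses that $O$ is a Pauli beyond ensuring $\{c_P^2\}$ is normalized, and that it applies verbatim to any product basis, thereby recovering the familiar statement that entanglement is the minimal basis-dependent participation entropy.
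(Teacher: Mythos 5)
Your proposal is correct and takes essentially the same route as the paper: both identify the OSE as the participation entropy of the Choi state $|O_U\rangle\!\rangle$ in the product Pauli--Choi basis, and then invoke the fact that entanglement is the minimal such entropy over bipartite-product bases. Your explicit majorization chain (coarse-graining the joint distribution to the $A$-marginal, then Schur--Horn, then Schur-concavity of $S^{(\alpha)}$ for all $\alpha \geq 0$) supplies rigorously the very step the paper only asserts via its closing remark relating diagonal entries of $\rho_A$ to its eigenvalues.
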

\begin{proof}
    We first note that the Choi state $|O_U\rangle\!\rangle := (O_U \otimes \id )\ket{\phi^+}$ is a normalized pure state. Then the OSE is just the entropy of the coefficients in the (normalized) computational basis of the Choi state $|O_U\rangle\!\rangle$,
    \begin{align}
        M^{(\alpha)}(O_U) &= \frac{1}{1-\alpha}  \log \sum_{P \in \mathcal{P}} \left(D^{-1}\tr[O_U P]\right)^{2\alpha} \\
        &= \frac{1}{1-\alpha}  \log \sum_{P \in \mathcal{P}} |\langle \! \langle{P|O_U}\rangle\!\rangle|^{2 \alpha}\\
        &= \frac{1}{1-\alpha}  \log \sum_{i=1}^{4^N} |\langle \! \langle{i|O_U}\rangle\!\rangle|^{2 \alpha} \label{eq:OSEoverlap}
    \end{align}
    where the computational basis arises as the Choi states of Pauli operators according to the mapping $P_i \to |{i}\rangle \! \rangle$. {That is, we identify Pauli strings with computational basis states in operator space, and the index $i$ iterates over these basis states of $N$ $4$-dimensional qudits.} Here, we have absorbed the normalization $1/\sqrt{D}$ into the expression of the Choi states $|O_U\rangle\!\rangle$ and $|P\rangle\!\rangle$, such that they are normalized pure states. Consider an arbitrary bipartition $A:B$. Any pure state (including $|O_U\rangle\!\rangle$) can be written as a superposition in terms of some basis $\mathcal{B}$ which is product across this bipartition 
    \begin{equation}
        |O_U\rangle\!\rangle = \sum_{j=1}^{r} \lambda_j |a_j\rangle\!\rangle |b_j\rangle\!\rangle. \label{eq:decomp1}
    \end{equation}
    We call this condition on the basis $\mathcal{B}$ the bipartite-product condition. {Here, the amplitudes $ \lambda_j$ are non-zero for some $1\leq j \leq r$, where $r \leq 4^N$.}
    The {decomposition Eq.~\eqref{eq:decomp1}} is clearly non-unique, and the coefficients $\lambda_j $ together with their cardinality {$r$} are dependent on the choice {of basis} $\mathcal{B}$. If one dephases with respect to this basis, {then} the entropy of the resultant density matrix is 
    \begin{equation}
        S^{(\alpha)}_{\mathcal{B}}(|O_U\rangle\!\rangle) := S^{(\alpha)}(\{ \lambda_j^2 \} ),
    \end{equation}
    where $S^{(\alpha)}$ is the classical $\alpha$-R\'enyi entropy. We call this the bipartite-basis entropy with respect to $\mathcal{B}$. 
    As a basis which is local everywhere (not just in $A:B$), the Pauli basis clearly satisfies the bipartite-product condition, {
    \begin{equation}
        |O_U\rangle\!\rangle = \sum_{i=1}^r \lambda_{i} | i \rangle \! \rangle =\sum_{i_A=1}^{r_A} \sum_{i_B=1}^{r_B} \lambda_{i_A,i_B} |i_A\rangle\!\rangle|i_B\rangle\!\rangle,
    \end{equation}}
    and the corresponding entropy $S^{(\alpha)}_{\mathcal{P}}(|{O_U}\rangle\!\rangle)$ is the OSE [Eq.\eqref{eq:OSEoverlap}]. {Here, the first sum is over $r \leq 4^N$ computational basis elements, and the right hand side is simply writing the computational basis states out in terms of those which act locally on $A$ vs. $B$, where $r=r_A r_B$ and $r_A \leq 4^{N_A}$, $r_B \leq 4^{N_B}$.} The Schmidt basis is defined as the unique basis $\mathcal{B}$ satisfying the bipartite-product condition which minimizes the corresponding bipartite-basis entropy. Then the bipartite-basis entropy of the Schmidt basis is just the entanglement entropy, and so lower bounds all other bipartite-basis entropies, including the OSE. {Note that the above considerations can also be deduced also the definition of the (von Neumann) entropy of the reduced density matrix on $A$: it is the minimum classical Shannon entropy of the square of the diagonal elements of a density matrix, where the minimum is over any basis. The eigenbasis (Schmdit basis) gives the lowest entropy. }
\end{proof}
Combining Props.~\ref{prop:nu-bound} and \ref{prop:osreBound}, together with the fact that $\nu(U) \leq \tau(U)$~\cite{Jiang2023}, we arrive at the hierarchy of upper bounds of Thm.~\ref{thm:main}, 
\begin{equation}
    E_A^{(\alpha)}({O_U}) \leq M^{(\alpha)}(O_U) \leq \nu(U) \leq \tau(U).
\end{equation}

\section{Bounding R\'enyi LOE from average operator purity in replica space} \label{ap:loe_purity}
In this document, we will detail the proofs for the lower bounds of average LOE in terms of magic monotones of nullity and $T$-count, over both the $\nu$-compressible and $T$-doped ensembles. 

% For convenience, we restate the Thm.~2 from the main text.
% \setcounter{thm}{1}
% \begin{thm} \label{thm:lower-bound}
%     For R\'enyi indices of $\alpha \leq 2$, the average LOE is bounded as
%     \begin{equation}
%         \int_{U \sim \mc{E}} E^{(\alpha)}_A(O_U) \geq  \begin{cases}
%      \nu -2 - \mc{O}\left( \frac{2^{\nu}}{D} + \frac{1}{4^{\nu}}\right),& \mc{E} = \mu_{\nu},\\
%     \log\!\left(\frac{4}{3}\right) \tau - \mc{O}\left( \frac{(4/3)^\tau}{D}\right) ,              & \mc{E} = \mu_{\tau}.
%     \end{cases}
%     \end{equation}
%     Here, the lower-bounds are always positive, and $\mu_\tau$ and $\mu_\nu$ refer to the $T$-doped Clifford or $\nu$-compressible ensembles respectively.
% \end{thm}
% \setcounter{thm}{8}

The key part of these proofs will be the exact computation of the average operator purity, defined as 
\begin{equation}
    E^{(\text{pur})}(O_U) := \tr[\left(\tr_{\bar{A}}(|O_U\rangle\!\rangle\langle \! \langle {O_U}|)\right)^2], \label{eq:Prepurity}
\end{equation}
i.e. the purity of the Choi stat $|O_U\rangle\!\rangle$. Above, we consider operator entanglement to be across any (spatial) bipartition of $\mc{H}_A \otimes \mc{H}_{\bar{A}}$, of $N_A$ and $N_{\bar{A}}$ qubits respectively. In the case of both unitary ensembles considered here, the particular qubits in $\mc{H}_A$ does not matter, only the size $N_A$. This is because both the Clifford and unitary groups are invariant under left or right multiplication by a permutation matrix (as permutations are contained in the Clifford group, and the Clifford group is contained in the unitary group). We can write the operator purity in replica space as 
\begin{equation}
    E^{(\text{pur})}(O_U) = \frac{1}{D^2}\tr[O_U^{\ot 4} T'], \label{eq:purityLOE}
\end{equation}
where 
\begin{equation}
    T' := T_{(1^A 2^A)(3^A 4^A) (1^{\bar{A}} 4^{\bar{A}})(2^{\bar{A}} 3^{\bar{A}}) } \label{eq:Tprime}
\end{equation}
is a permutation matrix, with $i^A$ denoting the $\mc{H}_A$ subspace of the $i^{th}$ replica space. Eq.~\eqref{eq:purityLOE} is easy to prove graphically, and is part of the standard `replica trick' technique,
\begin{align}
    E^{(\text{pur})}(O_U)=\tr[\left(\tr_{\bar{A}}(|O_U\rangle\!\rangle\!\langle\!\langle O_U|)\right)^2] &= \tr\left[ \frac{1}{D}\tr_{\bar{A}}\left(\includegraphics[scale=1.3, valign=c]{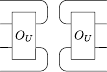}\right) \frac{1}{D}\tr_{\bar{A}}\left(\includegraphics[scale=1.3, valign=c]{parta.pdf}\right)\right] \\
    &=\frac{1}{D^2}\tr\left[ \,\includegraphics[scale=1.3, valign=c]{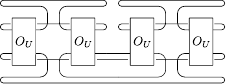}\, \right] \\
    &=\frac{1}{D^2}\tr\left[ \,\includegraphics[scale=1.3, valign=c]{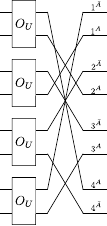}\, \right] = \frac{1}{D^2}\tr[O_U^{\ot 4} T'].
\end{align}
Here, in the first diagram the top two lines represent the (doubled) Hilbert space $\mc{H}_{\bar{A}}$, whereas the bottom two wires represent $\mc{H}_{{A}}$. The normalization of $(1/D^2)$ in Eq.~\eqref{eq:purityLOE} comes from the Choi state normalization in Eq.~\eqref{eq:Prepurity}.

In the replica form of Eq.~\eqref{eq:purityLOE}, the average operator purity takes the form of a $4-$fold average. This can be computed exactly using (generalized) Weingarten calculus. The relevant four-fold unitary (or Clifford) Haar averages contain $(4!)^2=576$ terms (from the cardinality of the permutation group), and so post-simplification we will handle these symbolically using Mathematica. Note that the identity of the initial Pauli $O$ does not matter (local or otherwise), as in the case of both ensembles considered, the first random Clifford applied to it essentially randomizes it. In principle, one could use higher moment formulas to compute higher LOE R\'enyi entropies through the exact expression for the unitary and Clifford commutants~\cite{collins_integration_2006,Gross_2021}. The involved Weingarten calculus becomes much more complex in this case, and we leave it for future work.

From the exact value of the average operator purity (to be computed in the following sections), the proof strategy is to then apply Jensen's inequality to bound the average $2-$R\'enyi entropy, and then to apply the hierarchy of R\'enyi entropy bounds: $S^{(a)} \leq S^{(b)}$ for $a \geq b $, and so a lower bound in terms of $\alpha=2$ is also a lower bound for $\alpha \leq 2$. So in summary 
\begin{equation}
    -\log(\int_U E^{(\text{pur})}(O_U) )\leq \int_U E^{(2)}(O_U) \leq \int_U E^{(1)}(O_U)
\end{equation}
and additionally
\begin{equation}
    -\frac{1}{2}\log(\int_U E^{(\text{pur})}(O_U) )\leq  \int_U E^{(\infty)}(O_U)
\end{equation}
from the bound $E^{(2)} \leq 2 E^{(\infty)}$. This final bound can also be used to provide a lower bound in Thm.~\ref{thm:main_lower} in terms of R\'enyi entropies $\alpha > 2$, but we do not bother as $S^{(\alpha)}$ for $0\leq \alpha \leq 2$ are more informative~\cite{Verstraete2006,Schuch_2008}.

\section{Clifford and Haar averages}
We will use Weingarten techniques to exactly evaluate $4-$fold averages of the operator purity. We have the following expressions for (i) the four-fold Clifford average~\cite{zhu2016cliffordgroupfailsgracefully,Roth_2018,Leone2021quantumchaosis},
\begin{equation}
     \int_{C \in \mc{C}} (C^\dg)^{\ot 4} X C^{\ot 4} = \sum_{\pi, \sigma \in \mathcal{S}_4} \mathrm{Wg}^+_{\pi \sigma} \tr[X \Lambda^+ T_\pi]\Lambda^+ T_\sigma + \mathrm{Wg}^-_{\pi \sigma} \tr[X \Lambda^{- } T_\pi] \Lambda^{- } T_\sigma, \label{eq:cliff_av}
\end{equation}
and (ii) the $4-$fold unitary average~\cite{Mele_2024},
\begin{equation}
    \int_{V \in \mc{U}} (V^\dg)^{\ot 4} X V^{\ot 4} =\sum_{\pi, \sigma \in \mathcal{S}_4} \mathrm{Wg}_{\pi \sigma} \tr[X T_\pi] T_\sigma. \label{eq:haarWein}
\end{equation}
Here, $X \in \mc{H}^{\ot 4}$ lives in $4-$replica space, $\mathcal{S}_4$ is the symmetric group, $T_\pi $ are permutations, $\Lambda^\pm$ are the orthogonal projectors
\begin{equation}
    \Lambda^+ = \frac{1}{D^2} \sum_{P \in \mathcal{P}_N} P^{\otimes 4}; \qquad \Lambda^{-} = \id^{\otimes 4} - \Lambda^+ \label{eq:lambdas}
\end{equation}
which are symmetric under permutations in replica space, and so commute with any $4-$fold Clifford channel. $\mathrm{Wg}^{x}$ are (generalized) Weingarten functions, 
\begin{equation}
    \mathrm{Wg}^{x}_{\pi \sigma} (D) := \sum_{\lambda \vdash 4, D_\lambda^{x} \neq 0} \frac{d_\lambda^2 \chi^{\lambda}(\pi \sigma)}{(4!)^2 D_\lambda^{x}} \label{eq:haar_general}
\end{equation}
where $x \in \{+,-,0 \}$, and 
\begin{itemize}
    \item $\lambda$ labels the irreducible representations of the symmetric group, {with $\lambda \vdash 4$ indicating the integer partitions of $4$,}
    \item $d_\lambda$ is the dimension of the irreducible representation $\lambda$,
    \item $\chi^{\lambda}(\pi \sigma)$ are the characters of the symmetric group, and
    \item $D_\lambda^{x} := \tr[\Lambda^{x} T_\lambda] $, where $T_\lambda$ are projectors onto the irreducible representation $\lambda$.
\end{itemize}
Note that the functions $\mathrm{Wg}^{\pm}$ differ from the usual (Haar) Weingarten functions $\mathrm{Wg}^{0}=:\mathrm{Wg}$ only through the factor $D_\lambda^{\pm}$, so we can use the usual symbolic tools for Haar integrals. Indeed, we will simply use the table from Lemma 1 in Ref.~\cite{zhu2016cliffordgroupfailsgracefully} to determine $D_\lambda^{\pm}$ and $d_\lambda$ in Eq.~\eqref{eq:haar_general}.

% A way to simplify the proofs (apparently -- I have checked it is true via brute force):
% \begin{equation}
%     \sum_{\sigma} \mathrm{Wg}_{ \mu \sigma}^x = (4! D^x_{[4]})^{-1}
% \end{equation}
% where $[4]$ is the permutation $4-$cycle, and $x \in \{0,-,+\} $ as usual. Explicitly,
% \begin{align}
%     &D_{[4]}=\frac{1}{d(d+1)(d+2)(d+3)}\\
%     &D_{[4]}^+=\frac{1}{4(d+1)(d+2)} \\
%     &D_{[4]}^-=\frac{1}{(d-1)(d+1)(d+2)(d+4)}.
% \end{align}

As a warm-up before tackling the full computations, we first prove two useful identities.
\begin{lemma}\label{lem:1}
    For $O \in \mc{P}_N$, $T_\pi$ the permutation matrix for $\pi \in \mathcal{S}_4$, and $\Lambda^\pm$ defined as above, 
    \begin{align}
        &\tr[O^{\ot 4} \Lambda^{+ } T_\pi] = \tr[ \Lambda^{+ } T_\pi] =D^{\#(\pi)-2\delta_{\pi o}} , \text{ and} \\
        &\tr[O^{\ot 4} \Lambda^{- } T_\pi] =-\delta_{\pi o}D^{\#(\pi)-2} ,
    \end{align}
    where $\#(\pi)$ is the number of cycles for the permutation $\pi$, and 
    \begin{equation}
    \delta_{\pi o} = \begin{cases}
        0 & \text{for } \pi \in \mc{S}^e \;,\\
        1 & \text{for } \pi \in \mc{S}^o\;.
        \end{cases} \label{eq:deltas}
\end{equation}
$\mc{S}^e \subset \mc{S}_4$ is defined as the `even' permutations, which contain only even cycles, with $\mc{S}^o \subset \mc{S}_4$ being the complement (e.g. $(12)(34)$ is even, but $(12)(3)(4)$ is odd). 
\end{lemma}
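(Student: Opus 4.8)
The plan is to reduce everything to two elementary trace computations plus one phase-cancellation observation, all resting on the standard identity $\tr[A^{\ot 4}T_\pi]=\prod_{c}\tr[A^{|c|}]$, where the product runs over the cycles $c$ of $\pi$ and $|c|$ denotes the cycle length. The only Pauli facts I need are $\tr[\id]=D$, and for $O\in\mc{P}_N\backslash\{\id\}$ that $O^2=\id$ and $\tr[O]=0$, so that $\tr[O^{|c|}]=D$ when $|c|$ is even and $0$ when $|c|$ is odd.

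First I would establish the operator identity $O^{\ot 4}\Lambda^+=\Lambda^+$, which immediately delivers the first claimed equality $\tr[O^{\ot 4}\Lambda^+T_\pi]=\tr[\Lambda^+T_\pi]$. Using $\Lambda^+=\frac{1}{D^2}\sum_{P\in\mc{P}_N}P^{\ot 4}$, write $O^{\ot 4}\Lambda^+=\frac{1}{D^2}\sum_P (OP)^{\ot 4}$. Since $O$ and $P$ are Pauli strings, $OP=cQ$ for some Pauli string $Q$ and a phase $c\in\{\pm 1,\pm i\}$; the key point is that $c^4=1$, so $(OP)^{\ot 4}=Q^{\ot 4}$, and as $P$ ranges over $\mc{P}_N$ the string $Q$ ranges over $\mc{P}_N$ bijectively. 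Hence the sum reproduces $\Lambda^+$ exactly.

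Next I would evaluate $\tr[\Lambda^+T_\pi]$ directly. Expanding, $\tr[\Lambda^+T_\pi]=\frac{1}{D^2}\sum_P\prod_c\tr[P^{|c|}]$, and I would split the Pauli sum into the $P=\id$ term and the $P\neq\id$ terms. The identity contributes $D^{\#(\pi)}$; each non-identity $P$ contributes $\prod_c\tr[P^{|c|}]$, which equals $D^{\#(\pi)}$ when every cycle of $\pi$ is even (i.e. $\pi\in\mc{S}^e$) and vanishes otherwise. Collecting the identity term with the $D^2-1$ surviving non-identity terms yields $\tr[\Lambda^+T_\pi]=D^{\#(\pi)}$ for $\pi\in\mc{S}^e$ and $D^{\#(\pi)-2}$ for $\pi\in\mc{S}^o$, which is precisely $D^{\#(\pi)-2\delta_{\pi o}}$.

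Finally, for the $\Lambda^-$ statement I would use $\Lambda^-=\id^{\ot 4}-\Lambda^+$ together with the first equality to write $\tr[O^{\ot 4}\Lambda^- T_\pi]=\tr[O^{\ot 4}T_\pi]-\tr[\Lambda^+T_\pi]$. The remaining ingredient $\tr[O^{\ot 4}T_\pi]=\prod_c\tr[O^{|c|}]$ is computed exactly as for the non-identity Paulis above (here $O\neq\id$ is needed), giving $D^{\#(\pi)}$ for $\pi\in\mc{S}^e$ and $0$ for $\pi\in\mc{S}^o$. Subtracting then gives $0$ on $\mc{S}^e$ and $-D^{\#(\pi)-2}$ on $\mc{S}^o$, i.e. $-\delta_{\pi o}D^{\#(\pi)-2}$. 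The argument is essentially bookkeeping once the phase-cancellation identity $O^{\ot 4}\Lambda^+=\Lambda^+$ is in hand; the only place demanding genuine care is the even/odd cycle case distinction together with the factor $1/D^2$, which must be tracked consistently when isolating the identity Pauli from the rest of the sum.
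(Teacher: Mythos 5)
Your proof is correct and follows essentially the same route as the paper's: both rest on the phase cancellation over four copies (the observation that $O\mc{P}_N=\mc{P}_N$ with fourth powers of the phases $\{\pm 1,\pm i\}$ equal to one), the cycle-structure identity $\tr[A^{\ot 4}T_\pi]=\prod_{c}\tr[A^{|c|}]$, and the even/odd-cycle case analysis combined with the identity versus non-identity Pauli split. Your packaging---proving $O^{\ot 4}\Lambda^+=\Lambda^+$ as a standalone operator identity and then obtaining the $\Lambda^-$ case from $\Lambda^-=\id^{\ot 4}-\Lambda^+$---is only a cosmetic reorganization of the paper's unified $\pm$ computation, though your explicit flag that the $\Lambda^-$ identity requires $O\neq\id$ (tracelessness) makes precise a hypothesis the paper's statement leaves implicit.
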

\begin{proof}
In $\tr[O^{\ot 4} \Lambda^{\pm } T_\pi] $, both $X=O^{\ot 4}$ and $\Lambda^{\pm}$ are permutation invariant, and so the expression $\tr[X \Lambda^{\pm } T_\pi]$ depends only on the cycle structure of $\pi$. Recalling that $\lambda \vdash 4$ is an integer partition of $4$,
\begin{align}
    \tr[O^{\ot 4} \Lambda^{\pm } T_\pi] &= \pm \frac{1}{D^2} \sum_{P \in \mc{P}} \tr[O^{\ot 4} P^{\ot 4}  T_\pi ] + \delta_{\pm -} \tr[O^{\ot 4} T_\pi ] \nn \\
    &= \pm \frac{1}{D^2} \sum_{P \in \mc{P}} \prod_{c \in \lambda_\pi}\tr[P^c] + \delta_{\pm -} \prod_{c \in \lambda_\pi}\tr[O^c] \label{eq:second}
\end{align}
where we have also used that left multiplication of Pauli strings in a sum over the full group $\mathcal{P}_N$ leaves the sum invariant, as $O\mathcal{P}_N = \mathcal{P}_N$. Note that any phase is canceled out as there are four copies. The trace in the first term of Eq.~\eqref{eq:second} depends only on whether the Pauli is identity or not; where if $P \neq \id$ then this term is non-zero only for even permutations. Moreover, $O$ is a traceless Pauli, so the second term is non-zero only for permutations $\mathcal{S}^e$ solely with cycles of even length. We therefore arrive at 
\begin{align}
    \tr[O^{\ot 4} \Lambda^{\pm } T_\pi] &=\begin{cases}
        \pm ({D^2}/{D^2}) \prod_{c \in \lambda_\pi}\tr[\id] + \delta_{\pm -} \prod_{c \in \lambda_\pi}\tr[\id] & \text{for }\lambda_\pi \ni c =2k\;,\\
        \pm (1/{D^2})\prod_{c \in \lambda_\pi}\tr[\id]  & \text{otherwise}\;,
        \end{cases}\\
        &=\begin{cases}
        \pm D^{\#(\pi)} + \delta_{\pm -} D^{\#(\pi)}  & \text{for }\lambda_\pi \ni c =2k\; (\pi \in \mathcal{S}^e) ,\\
        \pm D^{\#(\pi)-2}  & \text{otherwise } (\pi \in \mathcal{S}^o)\;. 
        \end{cases}\label{eq:CliffWg1} 
\end{align}
Where $\#(\pi)$ is the number of cycles for the permutation $\pi$. Simplifying this further, 
\begin{equation}
    \tr[O^{\ot 4} \Lambda^{+ } T_\pi] =(D^{\#(\pi)-2\delta_{\pi o}} ),
\end{equation}
and 
\begin{equation}
    \tr[O^{\ot 4} \Lambda^{- } T_\pi] =-\delta_{\pi o}(D^{\#(\pi)-2} ),
\end{equation}
where $  \delta_{\pi o} $ is defined in Eq.~\eqref{eq:deltas}.
\end{proof}

\begin{lemma} \label{lem:2}
    For a single qubit $Z-$rotation $T$-gate $P_{\pi/4}=\ket{0}\bra{0} + \exp[i\pi/4 ] \ket{1}\bra{1}$ and $K=P_\theta \otimes \id_{N-1}$, then 
    \begin{equation}
        \mathrm{tr} \left(  K^{\otimes 4} \Lambda^+ K^{\dagger \otimes 4} T_{\sigma } \right)=D^{\#(\sigma ) -2 \delta_{\sigma o}},\label{eq:1st}
    \end{equation}
    and, 
     \begin{align}
        \mathrm{tr} \left(  K^{\otimes 4} \Lambda^+ K^{\dagger \otimes 4} \Lambda^{+} T_{\sigma } \right) &= D^{\#(\sigma ) -2 \delta_{\sigma o}}_{N-1} \frac{1}{2^4}\sum_{P_1,P_2 \in \mc{P}_1}\tr[P_\theta^{\otimes 4} P_1 P_\theta^{\dagger \otimes 4} P_2 T_{\sigma } ]\\
        &=: D^{\#(\sigma ) -2 \delta_{\sigma o}}_{N-1} f_\sigma .\label{eq:2nd}
    \end{align}
    % Recall that $\lambda_\sigma$ is the cycle structure of a the permutation $\sigma$; e.g. $(123)(4)$ has the cycle structure $\{1,3 \}$. 
    This can be directly generalized to other single-qubit non-Cliffords $P_\theta$. 
\end{lemma}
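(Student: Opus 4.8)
The plan is to treat the two identities by different structural manipulations, since the presence or absence of the second projector $\Lambda^+$ changes which simplifications are available.

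First I would prove Eq.~\eqref{eq:1st}. The essential observation is that $K^{\ot 4}$ is \emph{replica-symmetric}: it applies the same operator $K$ in each of the four replica copies, so it commutes with every permutation operator, $T_\sigma K^{\ot 4} = K^{\ot 4} T_\sigma$ for all $\sigma \in \mc{S}_4$. Using cyclicity of the trace together with this commutation and unitarity $K^\dg K = \id$, I would compute
\begin{equation}
    \tr\!\left(K^{\ot 4} \Lambda^+ K^{\dg \ot 4} T_\sigma\right) = \tr\!\left(\Lambda^+ K^{\dg \ot 4} T_\sigma K^{\ot 4}\right) = \tr\!\left(\Lambda^+ (K^\dg K)^{\ot 4} T_\sigma\right) = \tr\!\left(\Lambda^+ T_\sigma\right).
\end{equation}
The right-hand side is then $D^{\#(\sigma)-2\delta_{\sigma o}}$ directly by Lemma~\ref{lem:1} (with $O=\id$). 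Note this argument never uses the explicit form of $P_\theta$ beyond unitarity, which immediately explains why the identity generalizes to any single-qubit non-Clifford gate.

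For the second identity (Eq.~\eqref{eq:2nd}) the cyclicity trick fails, because the intervening $\Lambda^+$ cannot be commuted through the non-Clifford $K^{\dg \ot 4}$. Instead I would reorganize the four-replica space $\mc{H}^{\ot 4}$ into the first-qubit replicas $\mc{H}_1^{\ot 4}$ and the remaining-qubit replicas $\mc{H}_{N-1}^{\ot 4}$. Under this reorganization each operator factorizes across the cut: $K^{\ot 4} = P_\theta^{\ot 4} \ot \id_{N-1}^{\ot 4}$, the permutation splits as $T_\sigma = T_\sigma^{(1)} \ot T_\sigma^{(N-1)}$, and---crucially---the projector factorizes as $\Lambda^+ = \Lambda^+_1 \ot \Lambda^+_{N-1}$ with $\Lambda^+_k = D_k^{-2}\sum_{P \in \mc{P}_k} P^{\ot 4}$. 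This last identity holds because the sum over $N$-qubit Pauli strings splits into a product of sums over single-qubit and $(N-1)$-qubit strings, while $D^2 = 4\,D_{N-1}^2$ absorbs the normalizations. Since $K^{\dg \ot 4}$ acts trivially on the $(N-1)$-qubit factor and $\Lambda^+_{N-1}$ is idempotent, the trace factorizes,
\begin{equation}
    \tr\!\left(K^{\ot 4} \Lambda^+ K^{\dg \ot 4} \Lambda^+ T_\sigma\right) = \tr\!\left(P_\theta^{\ot 4} \Lambda^+_1 P_\theta^{\dg \ot 4} \Lambda^+_1 T_\sigma^{(1)}\right) \cdot \tr\!\left(\Lambda^+_{N-1} T_\sigma^{(N-1)}\right).
\end{equation}
The second factor is $D_{N-1}^{\#(\sigma)-2\delta_{\sigma o}}$ by Lemma~\ref{lem:1} applied to $N-1$ qubits (the cycle data of $\sigma$ being unchanged), while expanding the two single-qubit projectors $\Lambda^+_1 = \tfrac{1}{4}\sum_{P \in \mc{P}_1} P^{\ot 4}$ reproduces exactly the prefactor $1/2^4$ and the double Pauli sum defining $f_\sigma$. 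Combining gives Eq.~\eqref{eq:2nd}.

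The bulk of the work---and the main place to be careful---is the bookkeeping of the two distinct tensor structures: the physical factorization $\mc{H} = \mc{H}_1 \ot \mc{H}_{N-1}$ versus the replica factorization into four copies. One must verify that $\Lambda^+$, $K^{\ot 4}$, and $T_\sigma$ all respect the induced bipartition of $\mc{H}^{\ot 4}$ into $\mc{H}_1^{\ot 4} \ot \mc{H}_{N-1}^{\ot 4}$, and in particular that $(P_1 \ot P')^{\ot 4}$ in replica space equals $P_1^{\ot 4} \ot P'^{\ot 4}$ after reorganization. No genuine analytic difficulty arises; the single-qubit quantity $f_\sigma$ is deliberately left unevaluated here, to be computed explicitly in the subsequent averaging steps.
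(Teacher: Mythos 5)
Your proposal is correct, and for the second identity it follows essentially the paper's own route: reorganize the four-replica space across the qubit-$1$ versus remaining-qubits cut, use the factorization $\Lambda^+ = \Lambda^+_1 \ot \Lambda^+_{N-1}$ (with $D^2 = 4 D_{N-1}^2$ absorbing normalizations) and the idempotence $(\Lambda^+_{N-1})^2 = \Lambda^+_{N-1}$, so that the trace splits into the single-qubit factor defining $f_\sigma$ times $\tr[\Lambda^+_{N-1} T^{(N-1)}_\sigma] = D_{N-1}^{\#(\sigma)-2\delta_{\sigma o}}$. Where you genuinely depart from the paper is the first identity: the paper also factorizes there, reducing to $\frac{1}{4}\sum_{P_1 \in \mc{P}_1}\tr[(P_\theta P_1 P_\theta^\dg)^{\ot 4} T_\sigma]$ and then arguing that conjugation by $P_\theta$ preserves the trace data of $P_1$ (tracelessness and $(P_\theta P_1 P_\theta^\dg)^{2k}=\id$), whereas you use the replica-symmetry commutation $T_\sigma K^{\ot 4} = K^{\ot 4} T_\sigma$ together with cyclicity and unitarity to collapse the expression to $\tr[\Lambda^+ T_\sigma]$ directly. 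Your argument is cleaner and strictly more general: it shows Eq.~\eqref{eq:1st} holds for \emph{any} unitary $K$ on any number of qubits, not just single-qubit-supported gates, which makes the closing remark of the lemma immediate; the paper's version, by contrast, keeps both identities within one uniform factorization bookkeeping, which is then reused verbatim in the averaging computations of App.~\ref{ap:tau_lower-bounds}. You also correctly identify why the cyclicity trick fails for Eq.~\eqref{eq:2nd} (the intervening $\Lambda^+$ cannot be commuted through the non-Clifford $K^{\dg \ot 4}$), which is precisely the reason the two identities need different treatments.
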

\begin{proof}
     For the first identity \eqref{eq:1st}, we have 
    \begin{align}
        \mathrm{tr} \left(  K^{\otimes 4} \Lambda^+ K^{\dagger \otimes 4} T_{\sigma } \right)&= \mathrm{tr} \left(  (P_\theta \otimes \id_{N-1})^{\otimes 4} \Lambda^+ (P_\theta \otimes \id_{N-1})^{\dagger \otimes 4} T_{\sigma } \right) \\
        &=\frac{1}{2^2}\sum_{P_1\in \mathcal{P}_1}  \tr[P_\theta^{\ot 4} P_1^{\ot 4} (P_\theta^\dagger)^{\ot 4} T_{\sigma } ] \tr[ \Lambda^+_{N-1}  T_{\sigma} ] \\
        &= D^{\#(\sigma ) -2 \delta_{\sigma o}}_{1}D^{\#(\sigma) -2 \delta_{\sigma o}}_{N-1} = D^{\#(\sigma ) -2 \delta_{\sigma o}}.
    \end{align}
    where we have again used that $\Lambda^+$ decomposes into a product operator in replica space, and that $\tr[ \Lambda^+   T_{\sigma} ] = D^{\#(\sigma) -2 \delta_{\sigma o}}$. Furthermore, for a function $\tr[X^{\ot 4} T_{\sigma}]$, only the trace properties of $X$ matter, and conjugation by the unitary $P_\theta$ preserves the trace (and that $(P_\theta P_1 P_\theta^\dagger)^{2k} = \id$). The proof for the second identity \eqref{eq:2nd} first follows a similar method to above, 
\begin{align}
    \mathrm{tr} \left(  K^{\otimes 4} \Lambda^+ K^{\dagger \otimes 4} \Lambda^{+} T_{\sigma } \right) &=\frac{1}{2^4}\sum_{P_1,P_2\in \mathcal{P}_1}  \tr[P_\theta^{\ot 4} P_1^{\ot 4} (P_\theta^\dagger)^{\ot 4} P_2^{\ot 4} T_{\sigma } ] \tr[ \Lambda^+_{N-1}   \Lambda^+_{N-1} T_{\sigma } ]   \\
    &=\frac{1}{2^4}\sum_{P_1,P_2\in \mathcal{P}_1}  \tr[P_\theta^{\ot 4} P_1^{\ot 4} (P_\theta^\dagger)^{\ot 4} P_2^{\ot 4} T_{\sigma } ]   D^{\#(\sigma ) -2 \delta_{(\sigma) o}}_{N-1},
\end{align}
where we have again used that $\Lambda^+$ decomposes into a product operator, and that it is a projector and so $(\Lambda^+)^2=\Lambda^+$. Finally, the factor 
    \begin{equation}
        f_\sigma=\frac{1}{2^4}\sum_{P_1,P_2\in \mathcal{P}_1}  \tr[P_\theta^{\ot 4} P_1^{\ot 4} (P_\theta^\dagger)^{\ot 4} P_2^{\ot 4} T_{\sigma} ]
    \end{equation}
    is found from direct computation.
    % \begin{equation}
    %     P_\theta = a_{\id} \id + a_{z}\sigma_z
    % \end{equation}
    % then 
    % \begin{align}
    %     f_\sigma = \frac{1}{2^4}\sum_{P_1,P_2\in \mathcal{P}_1}  \tr[((a_{\id} \id + a_{z}\sigma_z ) P_1 (a_{\id}^* \id + a_{z}^* \sigma_z )P_2)^{\ot 4} T_{\sigma} ]
    % \end{align}
\end{proof}

\section{Average operator purity over the $\nu$-compressible ensemble} \label{ap:nu_lowerbound}
 We first consider the $\nu$-compressible ensemble, as it is somewhat simpler to compute than the $T$-doped averages. Recall that the $\nu$-compressible ensemble is defined as the class of unitaries  $\{ U_\nu  = C_0 (V \otimes \id_{N-\nu}) C_1: C_{0/1} \in \mc{C}_N,  V \in \mathcal{U}_\nu \}$, equipped with the uniform measure over $C_{0},C_1 \in\mc{C}_N$ and $V \in \mc{U}_{\lceil \nu/2 \rceil}$ on the Clifford and unitary groups respectively. For ease of notation, we will write $V \equiv  (V \otimes \id_{N-\lceil \nu/2 \rceil})$ and ${\tilde{\nu}} := \lceil \nu/2 \rceil$ in the following. As usual, we take initial operator to be some non-identity Pauli string, $O \in \mathcal{P}_N \backslash \{ \id \}$.

Consider the average operator purity over the $\nu$-compressible ensemble. Substituting $U=C_0VC_1$, we have that 
\begin{equation}
    \int_{C_0,C_1 \in \mc{C}, V \in \mc{U}} E^{(\text{pur})}_A (O_U)= \frac{1}{D^2}\tr\left[\left( \int_{C_0,C_1 \in \mc{C}, V \in \mc{U}} O_U^{\ot 4} \right)T'\right] \label{eq:purityLOEav}
\end{equation}
where 
\begin{equation}
    O_U =C_1^\dg V^\dg C_0^\dg O C_0VC_1,
\end{equation}
and $T'$ is defined in Eq.~\eqref{eq:Tprime}. As each average is independent, we will first perform the Clifford average over $C_0$. From Eq.~\eqref{eq:cliff_av}, we find that 
\begin{align}
    \int_{C_0 \in \mc{C}_N} (C_0^\dg)^{\ot 4} O^{\ot 4} C_0^{\ot 4} &=\sum_{\zeta, \eta \in \mathcal{S}_4} \mathrm{Wg}^+_{\zeta \eta} \tr[O^{\ot 4} \Lambda^+ T_\zeta]\Lambda^+ T_\eta + \mathrm{Wg}^-_{\zeta \eta} \tr[O^{\ot 4} \Lambda^{- } T_\zeta] \Lambda^{- } T_\eta \\
    &=\sum_{\zeta, \eta \in \mathcal{S}_4} \mathrm{Wg}^+_{\zeta \eta} D^{\# (\zeta) -2\delta_{\zeta o}} \Lambda^+ T_\eta - \mathrm{Wg}^-_{\zeta \eta} \delta_{\zeta o} D^{\# (\zeta) -2} \Lambda^{- } T_\eta \\
    &= \sum_{\zeta, \eta \in \mathcal{S}_4} (\mathrm{Wg}^+_{\zeta \eta}  + \delta_{\zeta o} \mathrm{Wg}^-_{\zeta \eta}) D^{\# (\zeta) -2\delta_{\zeta o}} \Lambda^+ T_\eta - \delta_{\zeta o} \mathrm{Wg}^-_{\zeta \eta}  D^{\# (\zeta) -2}  T_\eta, \label{eq:gsd}
\end{align}
where we have used Lemma~\ref{lem:1}. 

For the next averaging over $V \in \mc{U}_{\tilde{\nu}}$, we need only consider its effect on the terms $\Lambda^+ T_\eta$ and $T_\eta$ from the above Eq.~\eqref{eq:gsd}, and sub back in the constants and summations at the end. However, the unitary $V$ only acts on ${\tilde{\nu}}$ qubits, with identity elsewhere. Both $\Lambda^+ T_\eta$ and $T_\eta$ are product operators within a single replica space, and so 
\begin{equation}
    \int_{V \in \mathbb{H}} (\id \ot V^\dg )^{\ot 4} \Lambda^+ T_\eta (\id \ot V^{\ot 4}) = \Lambda^+ T_\eta \otimes \int_{V \in \mathbb{H}} (V^\dg)^{\ot 4} \Lambda^+ T_\eta V^{\ot 4},
\end{equation}
where the dimensionality of each $\Lambda^+$, $T_\eta$ is clear from context: the first term of the tensor product in the final expression acts on $N-{\tilde{\nu}}$ qubits across $4-$replica space, while the second term they act on the remaining ${\tilde{\nu}}$ qubits. Note that the exact identity of this bipartition is irrelevant, as the global Clifford averaging is invariant under left/right multiplication by permutations. 

We can then use the usual Haar integration formula, Eq.~\eqref{eq:haarWein}, defining $D_{\tilde{\nu}} := 2^{{\tilde{\nu}}}$ we find that 
\begin{align}
     &\int_{V \in \mathbb{H}} (V^\dg)^{\ot 4} \Lambda^+ T_\eta V^{\ot 4} =  \sum_{\kappa, \mu \in \mathcal{S}_4}  \mathrm{Wg}_{\kappa \mu} (D_{\tilde{\nu}}) \tr[ \Lambda^+ T_\eta T_\kappa ] T_\mu = \sum_{\kappa, \mu \in \mathcal{S}_4}  \mathrm{Wg}_{\kappa \mu} D_{\tilde{\nu}}^{\#(\eta \kappa)-2\delta_{(\eta \kappa) o }} T_\mu, \text{ and} \\
     &\int_{V \in \mathbb{H}} (V^\dg)^{\ot 4}  T_\eta V^{\ot 4} =  \sum_{\kappa, \mu \in \mathcal{S}_4}  \mathrm{Wg}_{\kappa \mu} (D_{\tilde{\nu}}) \tr[  T_\eta T_\kappa ] T_\mu = \sum_{\kappa, \mu \in \mathcal{S}_4}  \mathrm{Wg}_{\kappa \mu} D^{\#(\eta \kappa)}_{\tilde{\nu}} T_\mu \label{eq:HaarPart}
\end{align}
where we have again used Lemma~\ref{lem:1}. We have here in the first equality explicitly included the dimensional dependence of the Weingarten functions, as this Haar average is over a space of ${\tilde{\nu}}$ qubits only. In the following, all Weingarten functions $\mathrm{Wg}_{\pi} = \mathrm{Wg}_{\pi}(D_{\tilde{\nu}})$, whereas all generalized Clifford Weingarten functions have the dependence $\mathrm{Wg}^\pm_{\pi} = \mathrm{Wg}_{\pi}^\pm(D)$. Substituting this into Eq.~\eqref{eq:gsd}, we have that 
\begin{align}
     \int_{C_0 \in \mc{C}_N, V \in \mc{U}_{\tilde{\nu}}} &(V^\dagger)^{\ot 4} (C_0^\dg)^{\ot 4} O^{\ot 4} C_0^{\ot 4} V^{\ot 4} \\
     =& \sum_{\zeta, \eta \in \mathcal{S}_4} (\mathrm{Wg}^+_{\zeta \eta}  + \delta_{\zeta o} \mathrm{Wg}^-_{\zeta \eta}) D^{\# (\zeta) -2\delta_{\zeta o}} \Lambda^+ T_\eta \otimes (\sum_{\kappa, \mu \in \mathcal{S}_4}  \mathrm{Wg}_{\kappa \mu} D_{\tilde{\nu}}^{\#(\eta \kappa)-2\delta_{(\eta \kappa) o }} T_\mu) \\
     &- \delta_{\zeta o} \mathrm{Wg}^-_{\zeta \eta}  D^{\# (\zeta) -2}  T_\eta \otimes (\sum_{\kappa, \mu \in \mathcal{S}_4}  \mathrm{Wg}_{\kappa \mu} D^{\#(\eta \kappa)}_{\tilde{\nu}} T_\mu)\\
     =& \sum_{\zeta, \eta,\kappa, \mu \in \mathcal{S}_4} (\mathrm{Wg}^+_{\zeta \eta}  + \delta_{\zeta o} \mathrm{Wg}^-_{\zeta \eta}) \mathrm{Wg}_{\kappa \mu} D^{\# (\zeta) -2\delta_{\zeta o}}  D^{\#(\eta \kappa)-2\delta_{(\eta \kappa) o }}_{\tilde{\nu}} (\Lambda^+ T_\eta \otimes T_\mu) \\
     &- \delta_{\zeta o} \mathrm{Wg}^-_{\zeta \eta}  \mathrm{Wg}_{\kappa \mu} D^{\# (\zeta) -2} D^{\#(\eta \kappa)}_{\tilde{\nu}}  (T_\eta \otimes T_\mu).\label{eq:hshsh}
\end{align}
Here, we stress again that the tensor product $T_\eta \otimes T_\mu$ is (in-order) on the first $N-{\tilde{\nu}}$ qubits and the next ${\tilde{\nu}}$ qubits respectively (in $4-$replica space). 

Now we will perform the final Clifford averaging, over the operators $(\Lambda^+ T_\eta \otimes T_\mu) $ and $(T_\eta \otimes T_\mu)$ in Eq.~\eqref{eq:hshsh}. We first find that 
\begin{align}
    \int_{C_1 \in \mc{C}_N} (C_1^\dg)^{\ot 4} (T_\eta \otimes T_\mu) C_1^{\ot 4} =&\sum_{\pi, \sigma \in \mathcal{S}_4} \mathrm{Wg}^+_{\pi \sigma} \tr[(T_\eta \otimes T_\mu ) \Lambda^+ T_\pi]\Lambda^+ T_\sigma + \mathrm{Wg}^-_{\pi \sigma} \tr[(T_\eta \otimes T_\mu) \Lambda^{- } T_\pi] \Lambda^{- } T_\sigma \nn \\
    =& \sum_{\pi, \sigma \in \mathcal{S}_4} \mathrm{Wg}^+_{\pi \sigma} D_{N-{\tilde{\nu}}}^{\# (\pi \eta) -2 \delta_{(\pi \eta)o}}D_{{\tilde{\nu}}}^{\# (\pi \mu) -2 \delta_{(\pi \mu)o}} \Lambda^+ T_\sigma \nn  \\
    &+ \mathrm{Wg}^-_{\pi \sigma} (\tr[(T_\eta \otimes T_\mu) T_\pi] -\tr[(T_\eta \otimes T_\mu) \Lambda^+ T_\pi])(\id - \Lambda^+) T_\sigma\nn  \\
    =& \sum_{\pi, \sigma \in \mathcal{S}_4} \mathrm{Wg}^+_{\pi \sigma} D_{N-{\tilde{\nu}}}^{\# (\pi \eta) -2 \delta_{(\pi \eta)o}}D_{{\tilde{\nu}}}^{\# (\pi \mu) -2 \delta_{(\pi \mu)o}} \Lambda^+ T_\sigma\nn  \\
    &+ \mathrm{Wg}^-_{\pi \sigma} (D_{N-{\tilde{\nu}}}^{\# (\pi \eta)} D_{{\tilde{\nu}}}^{\# (\pi \mu)} - D_{N-{\tilde{\nu}}}^{\# (\pi \eta) -2 \delta_{(\pi \eta)o}} D_{{\tilde{\nu}}}^{\# (\pi \mu) -2 \delta_{(\pi \mu)o}})(\id - \Lambda^+) T_\sigma\nn  \\
    =& \sum_{\pi, \sigma \in \mathcal{S}_4} \left( (\mathrm{Wg}^+_{\pi \sigma} +\mathrm{Wg}^-_{\pi \sigma} )D_{N-{\tilde{\nu}}}^{\# (\pi \eta) -2 \delta_{(\pi \eta)o}}D_{{\tilde{\nu}}}^{\# (\pi \mu) -2 \delta_{(\pi \mu)o}} - \mathrm{Wg}^-_{\pi \sigma} D_{N-{\tilde{\nu}}}^{\# (\pi \eta)} D_{{\tilde{\nu}}}^{\# (\pi \mu)}\right) \Lambda^+ T_\sigma \nn \\
    &+ \mathrm{Wg}^-_{\pi \sigma} (D_{N-{\tilde{\nu}}}^{\# (\pi \eta)} D_{{\tilde{\nu}}}^{\# (\pi \mu)} - D_{N-{\tilde{\nu}}}^{\# (\pi \eta) -2 \delta_{(\pi \eta)o}} D_{{\tilde{\nu}}}^{\# (\pi \mu) -2 \delta_{(\pi \mu)o}}) T_\sigma \label{eq:b43}
\end{align}
In the above, we have used that both $\Lambda^+$ and $T_\pi$ are product operators between qubits: that they factorize into acting independently on the $4-$replica space of the first $N-{\tilde{\nu}}$ qubits, and on the other ${\tilde{\nu}}$ qubits. 

Similarly, 
\begin{align}
    \int_{C_1 \in \mc{C}_N} (C_1^\dg)^{\ot 4} (\Lambda^+ T_\eta \otimes T_\mu) C_1^{\ot 4} =&\sum_{\pi, \sigma \in \mathcal{S}_4} \mathrm{Wg}^+_{\pi \sigma} \tr[(\Lambda^+ T_\eta \otimes T_\mu ) \Lambda^+ T_\pi]\Lambda^+ T_\sigma + \mathrm{Wg}^-_{\pi \sigma} \tr[(\Lambda^+ T_\eta \otimes T_\mu) \Lambda^{- } T_\pi] \Lambda^{- } T_\sigma \nn \\
    =& \sum_{\pi, \sigma \in \mathcal{S}_4} \mathrm{Wg}^+_{\pi \sigma} D_{N-{\tilde{\nu}}}^{\# (\pi \eta) -2 \delta_{(\pi \eta)o}}D_{{\tilde{\nu}}}^{\# (\pi \mu) -2 \delta_{(\pi \mu)o}} \Lambda^+ T_\sigma\nn  \\
    &+ \mathrm{Wg}^-_{\pi \sigma} (D_{N-{\tilde{\nu}}}^{\# (\pi \eta)-2\delta_{(\pi \eta) o}} D_{{\tilde{\nu}}}^{\# (\pi \mu)} - D_{N-{\tilde{\nu}}}^{\# (\pi \eta) -2 \delta_{(\pi \eta)o}} D_{{\tilde{\nu}}}^{\# (\pi \mu) -2 \delta_{(\pi \mu)o}})(\id - \Lambda^+) T_\sigma\nn  \\
    =& \sum_{\pi, \sigma \in \mathcal{S}_4} \left( (\mathrm{Wg}^+_{\pi \sigma} +\mathrm{Wg}^-_{\pi \sigma} )D_{N-{\tilde{\nu}}}^{\# (\pi \eta) -2 \delta_{(\pi \eta)o}}D_{{\tilde{\nu}}}^{\# (\pi \mu) -2 \delta_{(\pi \mu)o}} - \mathrm{Wg}^-_{\pi \sigma} D_{N-{\tilde{\nu}}}^{\# (\pi \eta)-2\delta_{(\pi \eta) o}} D_{{\tilde{\nu}}}^{\# (\pi \mu)}\right) \Lambda^+ T_\sigma \nn \\
    &+ \mathrm{Wg}^-_{\pi \sigma} (D_{N-{\tilde{\nu}}}^{\# (\pi \eta)-2\delta_{(\pi \eta) o}} D_{{\tilde{\nu}}}^{\# (\pi \mu)} - D_{N-{\tilde{\nu}}}^{\# (\pi \eta) -2 \delta_{(\pi \eta)o}} D_{{\tilde{\nu}}}^{\# (\pi \mu) -2 \delta_{(\pi \mu)o}}) T_\sigma \label{eq:b44}
\end{align}
Here, we have used that $\Lambda^+$ is a projector, and so $(\Lambda^+)^2= \Lambda^+$. This means that the above is almost identical to Eq.~\eqref{eq:b43}, expect for the factor proportional to $\tr[(\Lambda^+ T_\eta \otimes T_\mu) T_\pi] $.

Now, it remains to substitute the full (triple-)average over Clifford, then unitary, then Clifford groups, into the expression for operator purity, Eq.~\eqref{eq:purityLOEav}. Recalling the definition of the purity-permutation $T'$ in Eq.~\eqref{eq:Tprime}, from Eq.~\eqref{eq:b44} we simply need to compute $\tr[T_\sigma T']$ and $\tr[\Lambda^+  T_\sigma T']$, as everything else in our expression is a constant. Again utilizing that permutations and the projector $\Lambda^+$ are product operators between qubits, we find that, 
\begin{align}
     \tr[\Lambda^+ T_\sigma T' ] &= \frac{1}{D^2}\sum_{\substack{P_1 \in \mc{P}_{A} \\ P_2 \in \mc{P}_{\bar{A}}}} \tr[P_1 T_\sigma T_{(12)(34)}] \tr[P_2 T_\sigma T_{(14)(23)}] \\
     &= \begin{cases}
        D_{A}^{\# (\sigma {(12)(34)})} D_{\bar{A}}^{\#(\sigma {(14)(23)})} & \text{for } \sigma {(12)(34)}, \, \sigma {(14)(23)} \in \mc{S}^e \;,\\
        D_{A}^{\# (\sigma {(12)(34)})} D_{\bar{A}}^{\#(\sigma {(14)(23)})-2}  & \text{for } \sigma {(12)(34)} \in \mc{S}^e, \sigma {(14)(23)} \in \mc{S}^o\;,\\
        D_{A}^{\# (\sigma {(12)(34)})-2} D_{\bar{A}}^{\#(\sigma {(14)(23)})}  & \text{for } \sigma {(12)(34)} \in \mc{S}^o, \sigma {(14)(23)} \in \mc{S}^e\;,\\
        D_{A}^{\# (\sigma {(12)(34)})-2} D_{\bar{A}}^{\#(\sigma {(14)(23)})-2}   & \text{for }  \sigma {(12)(34)}, \,\sigma {(14)(23)} \in \mc{S}^o \;.
        \end{cases} \label{eq:lamTT} \\
        &=D_{A}^{\# (\sigma {(12)(34)}) -2\delta_{(\sigma {(12)(34)}) o}} D_{\bar{A}}^{\#(\sigma {(14)(23)}) -2\delta_{(\sigma {(14)(23)}) o}}.
\end{align}
Similarly, 
\begin{equation}
    \tr[T_\sigma T' ] = D_A^{\#( \sigma (12)(34) )}D_{\bar{A}}^{\#(\sigma (14)(23) )}. \label{eq:TT}
\end{equation}
To obtain our final expression, we use Eqs.~\eqref{eq:lamTT}-\eqref{eq:TT} to give the action of the final trace with respect to $T'$ in Eqs.~\eqref{eq:b43}-\eqref{eq:b44}, and then use these to give the final $4-$fold Clifford averaging of Eq.~\eqref{eq:hshsh}. Remembering the $1/D^2$ overall normalization in the operator purity Eq.~\eqref{eq:purityLOE}, we arrive at 
\begin{align}
    \int_{C_0,C_1 \in \mc{C}, V \in \mc{U}} E^{(\text{pur})}_A =& \frac{1}{D^2}\sum_{\zeta, \eta,\kappa, \mu,\pi,\sigma \in \mathcal{S}_4} (\mathrm{Wg}^+_{\zeta \eta}  + \delta_{\zeta o} \mathrm{Wg}^-_{\zeta \eta}) \mathrm{Wg}_{\kappa \mu} D^{\# (\zeta) -2\delta_{\zeta o}}  D^{\#(\eta \kappa)-2\delta_{(\eta \kappa) o }}_{\tilde{\nu}} \nn \\
    &\times \Bigg( \left( (\mathrm{Wg}^+_{\pi \sigma} +\mathrm{Wg}^-_{\pi \sigma} )D_{N-{\tilde{\nu}}}^{\# (\pi \eta) -2 \delta_{(\pi \eta)o}}D_{{\tilde{\nu}}}^{\# (\pi \mu) -2 \delta_{(\pi \mu)o}} - \mathrm{Wg}^-_{\pi \sigma} D_{N-{\tilde{\nu}}}^{\# (\pi \eta)-2\delta_{(\pi \eta) o}} D_{{\tilde{\nu}}}^{\# (\pi \mu)}\right) \nn\\
    &\times D_{A}^{\# (\sigma {(12)(34)}) -2\delta_{(\sigma {(12)(34)}) o}} D_{\bar{A}}^{\#(\sigma {(14)(23)}) -2\delta_{(\sigma {(14)(23)}) o}} \nn \\
    &+ \mathrm{Wg}^-_{\pi \sigma} (D_{N-{\tilde{\nu}}}^{\# (\pi \eta)-2\delta_{(\pi \eta) o}} D_{{\tilde{\nu}}}^{\# (\pi \mu)} - D_{N-{\tilde{\nu}}}^{\# (\pi \eta) -2 \delta_{(\pi \eta)o}} D_{{\tilde{\nu}}}^{\# (\pi \mu) -2 \delta_{(\pi \mu)o}}) D_A^{\#( \sigma (12)(34) )}D_{\bar{A}}^{\#(\sigma (14)(23) )}\Bigg) \nn\\
     &- \delta_{\zeta o} \mathrm{Wg}^-_{\zeta \eta}  \mathrm{Wg}_{\kappa \mu} D^{\# (\zeta) -2} D^{\#(\eta \kappa)}_{\tilde{\nu}}  \Bigg(\Big( (\mathrm{Wg}^+_{\pi \sigma} +\mathrm{Wg}^-_{\pi \sigma} )D_{N-{\tilde{\nu}}}^{\# (\pi \eta) -2 \delta_{(\pi \eta)o}}D_{{\tilde{\nu}}}^{\# (\pi \mu) -2 \delta_{(\pi \mu)o}} \nn\\
     &- \mathrm{Wg}^-_{\pi \sigma} D_{N-{\tilde{\nu}}}^{\# (\pi \eta)} D_{{\tilde{\nu}}}^{\# (\pi \mu)}\Big) D_{A}^{\# (\sigma {(12)(34)}) -2\delta_{(\sigma {(12)(34)}) o}} D_{\bar{A}}^{\#(\sigma {(14)(23)}) -2\delta_{(\sigma {(14)(23)}) o}} \nn \\
    &+ \mathrm{Wg}^-_{\pi \sigma} (D_{N-{\tilde{\nu}}}^{\# (\pi \eta)} D_{{\tilde{\nu}}}^{\# (\pi \mu)} - D_{N-{\tilde{\nu}}}^{\# (\pi \eta) -2 \delta_{(\pi \eta)o}} D_{{\tilde{\nu}}}^{\# (\pi \mu) -2 \delta_{(\pi \mu)o}}) D_A^{\#( \sigma (12)(34) )}D_{\bar{A}}^{\#(\sigma (14)(23) )}\Bigg) \\
    =& \frac{16^{-{\tilde{\nu}}}}{\left(-9 + 4^{\tilde{\nu}}\right) \left(-4 + D^2\right) \left(-1 + D^2\right)^2 D_A^2} \Bigg( 24 D^4 \left(d - d_A\right) \left(d + d_A\right) \left(-1 + d_A^2\right) \nn\\
& - 4^{1 + {\tilde{\nu}}} D^2 \left(-4 + 9 D^2\right) \left(D- D_A\right) \left(D + D_A\right) \left(-1 + D_A^2\right) \nn\\
& + 16^{\tilde{\nu}} \big( 18 D^4 - 13 D^6 + \left(36 - 99 D^2 + 49 D^4 + 4 D^6\right) D_A^2 + 
D^2 \left(18 - 13 D^2\right) D_A^4
\big) \nn\\
& + 64^{\tilde{\nu}} \big(
D^6 - 4 D_A^2 + D^2 D_A^2 \left(11 - 2 D_A^2\right) + 
D^4 \left(-2 - 5 D_A^2 + D_A^4\right)
\big)
\Bigg).\nn
\end{align}
This was computed symbolically in Mathematica. The above is a generalization of the top upper bound presented in Thm.~\ref{thm:main_lower}; it is valid for any $D$, $\nu$, and $D_A$.
We will now explore various limits of this quantity to extract some physical insight.

First consider ${\tilde{\nu}}=N$. Our answer should agree with the Haar averaged LOE, as the Cliffords $C_0,C_1$ will have no effect in this case: Clifford twirling a Haar twirl leaves the Haar twirl invariant, due to the unitary invariance of the Haar measure.
\begin{cor} \label{cor:haar1}
For a Haar random unitary $V$, the average Heisenberg operator purity is 
    \begin{align}
        \int_{V \in \mc{U}_N} E^{(\text{pur})}_A(V^\dagger O V) = \lim_{{\tilde{\nu}} \to N}\int_{ C_0,C_1 \in \mc{C}_N, V \in \mc{U}_{\tilde{\nu}}} E^{(\text{pur})}_A &= \frac{  
  (D^4_A+D^2) (D^2-10) -D_A^2 (D^2-19  ) }{ D_A^{2}(D^2-9)(D^2-1)} \\
  &= \frac{1}{D_A^2} + \frac{1}{D_{\bar{A}}^2} + \mc{O}(\frac{1}{D}).
    \end{align}
    where the final equality is written to leading order in $1/D$.
\end{cor}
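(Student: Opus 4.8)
The plan is to prove the corollary in two stages: first the operational identity that collapses the triple average into a single Haar average (the first equality), and then the explicit evaluation of the resulting rational function together with its expansion.

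For the first equality I would invoke the left/right invariance of the Haar measure. When $\tilde{\nu}=N$ the inner unitary $V$ acts on all $N$ qubits, so for any fixed $C_0,C_1\in\mc{C}_N\subset\mc{U}_N$ the map $V\mapsto C_0 V C_1$ is a measure-preserving bijection of $\mc{U}_N$. Hence $U=C_0 V C_1$ is itself Haar-distributed and the outer Clifford averages are redundant, giving $\int_{C_0,C_1,V} E^{(\text{pur})}_A(O_U)=\int_{V\sim\mc{U}_N} E^{(\text{pur})}_A(V^\dagger O V)$, which is precisely the statement of the $\tilde{\nu}\to N$ limit. This is the only genuinely conceptual step.

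For the closed form I would take the limit directly in the general average operator purity derived above, substituting $D_{\tilde{\nu}}=2^{\tilde{\nu}}=2^N=D$ (equivalently $4^{\tilde{\nu}}\to D^2$, $16^{\tilde{\nu}}\to D^4$, $64^{\tilde{\nu}}\to D^6$, $16^{-\tilde{\nu}}\to D^{-4}$) and $D_{N-\tilde{\nu}}=1$, and then simplifying symbolically. As an independent cross-check I would recompute the Haar average from scratch, which is cleaner since it bypasses the generalized Clifford Weingarten functions entirely: apply the four-fold unitary formula Eq.~\eqref{eq:haarWein} to $\int_V (V^\dagger)^{\ot 4} O^{\ot 4} V^{\ot 4}$, noting by the same Pauli-trace argument used in Lemma~\ref{lem:1} (namely $\tr[O^{\ot 4}T_\pi]=\prod_{c\in\lambda_\pi}\tr[O^{|c|}]$, with $\tr[O]=\tr[O^3]=0$ and $\tr[O^2]=\tr[O^4]=D$ for a non-identity Pauli) that the sum over $\pi$ is supported only on the three double transpositions ($\pi=2^2$, contributing $D^2$) and the six four-cycles ($\pi=4$, contributing $D$). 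Contracting the resulting $\sum_\sigma \mathrm{Wg}_{\pi\sigma}(D)\,T_\sigma$ against the purity permutation $T'$ via Eq.~\eqref{eq:TT}, dividing by the overall $1/D^2$ from Eq.~\eqref{eq:purityLOE}, and collapsing the $\mc{S}_4$ sum yields the stated rational function. Finally, for the expansion I would set $D=D_A D_{\bar{A}}$: the leading numerator terms $D_A^4 D^2+D^4-D_A^2 D^2$ over the leading denominator $D_A^2 D^4$ reduce to $D_A^2/D^2+1/D_A^2=1/D_{\bar{A}}^2+1/D_A^2$, with the remainder $\mc{O}(1/D)$.

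The main obstacle is purely symbolic bookkeeping: organizing either the $\tilde{\nu}\to N$ reduction of the lengthy general expression or the from-scratch $\mc{S}_4$ Weingarten contraction into the single compact closed form. Both routes are routine but tedious, so I would carry them out in Mathematica and treat their agreement as validation, with the entire conceptual weight carried by the Haar-invariance argument of the first paragraph.
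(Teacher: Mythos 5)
Your proposal is correct and follows essentially the same route as the paper: the first equality rests on exactly the paper's argument (Clifford-twirling a Haar twirl leaves it invariant by unitary invariance of the Haar measure), and the closed form is obtained by setting $\tilde{\nu}=N$ (i.e.\ $D_{\tilde{\nu}}=D$, $D_{N-\tilde{\nu}}=1$) in the general $\nu$-compressible formula and simplifying symbolically, just as the paper does in Mathematica. Your supplementary from-scratch check via the four-fold unitary Weingarten formula---correctly observing that $\tr[O^{\otimes 4}T_\pi]$ is supported only on the three double transpositions (each giving $D^2$) and the six four-cycles (each giving $D$)---is a sound independent validation but not a genuinely different route.
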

Note that the final expression agrees with the leading order term for the Haar averaged operator purity from Ref.~\cite{Kudler-Flam2021} (see Eq.~(29)). Moreover, choosing an equal-sized bipartition of $D_A = \sqrt{D}$, we arrive at Eq.~\eqref{eq:haar} in the main text, 
\begin{equation}
    \int_{U \sim \mc{U}_N} E^{(\text{pur})}_{A}(O_U) = \frac{-19 + D + 2 D^2}{(1 + D) (-9 + D^2)}. \label{eq:haar1}
\end{equation}

Now consider the limit $D_A = \sqrt{D}$ for arbitrary $\nu$, to get the simpler expression,
\begin{align}
    \int_{C_0,C_1 \in \mc{C}_N, V \in \mc{U}_{\tilde{\nu}}} E^{(\text{pur})}_{N_A=N/2}(O_U) = & \frac{16^{-{\tilde{\nu}}}}{\left(-9 + 4^{\tilde{\nu}}\right) \left(1 + D\right)^2 \left(-4 + D^2\right)} \Bigg(  24 D^4 - 4^{1 + {\tilde{\nu}}} D^2 \left(-4 + 9 D^2\right) \\
& + 64^{\tilde{\nu}} \big(-4 + D \left(-8 + D \left(-1 + 2D\right)\right)\big) + 16^{\tilde{\nu}} \big(
36 + D \left(72 + D \left(9 + 2D \left(-9 + 2D\right)\right)\right) \nn
\big)
\Bigg).
\end{align}
Taking the leading and subleading order expression in large $D$, we arrive at
\begin{align}
    \int_{C_0,C_1 \in \mc{C}_N, V \in \mc{U}_{\tilde{\nu}}} E^{(\text{pur})}_{N_A=N/2} (O_U) &\approx
    \frac{4 D_{\tilde{\nu}}^4-36 D_{\tilde{\nu}}^2 +24}{D_{\tilde{\nu}}^4(D_{\tilde{\nu}}^2-9)}+\frac{2D_{\tilde{\nu}}^2 -18}{D_{\tilde{\nu}}^2-9} \left(  \frac{1}{D}\right)+ \mc{O}\left( \frac{1}{D^2}\right)\\
    &= \frac{4 }{D_{\tilde{\nu}}^2}+ \frac{24}{D_{\tilde{\nu}}^4(D_{\tilde{\nu}}^2-9)} + \frac{2}{D}+ \mc{O}\left( \frac{1}{D^2}\right)
\end{align}
Here, we have neglected factors proportional to $1/D^2$. To obtain a lower bound as that which appears in Thm.~\ref{thm:main_lower}, we apply Jensen's inequality, 
\begin{equation}
     -\log\left[\int_{C_0,C_1 \in \mc{C}_N, V \in \mc{U}_{\tilde{\nu}}} E^{(\text{pur})}_{N_A=N/2} (O_U)\right] \leq  \int_{C \in \mc{C}, V \in \mathbb{H}} E^{(2)}_{N_A=N/2}(O_U)
\end{equation}
and so after some simplification, (also recalling that $D_\nu = 2^{\nu/2}$),
\begin{align}
&\nu -2 - \log\left(1 + \frac{6}{D_{\tilde{\nu}}^4(D_{\tilde{\nu}}^2-9)} + \mc{O}(D_{\tilde{\nu}}^2/D)\right)\leq  \int_{C_0,C_1 \in \mc{C}_N, V \in \mc{U}_{\tilde{\nu}}}  E^{(2)}_{N/2} \\
\implies & \nu -2 - \mc{O}\left( 2^{-2\nu} + 2^{\nu-N }\right) \leq  \int_{C_0,C_1 \in \mc{C}_N, V \in \mc{U}_{\nu/2}}  E^{(2)}_{N_A=N/2}
\end{align}
where the final line is an approximation for small $ \frac{D_\nu^2}{D} + \frac{1}{D_\nu^2}$. This is our final bound in terms of unitary nullity.

\section{Average operator purity over the $T$-doped Clifford ensemble} \label{ap:tau_lower-bounds}
We first restate the result from Ref.~\cite{Leone2021quantumchaosis} (based on results from \cite{zhu2016cliffordgroupfailsgracefully}):
\begin{thm} (\cite{Leone2021quantumchaosis}) \label{lem:lorenzo}
    For the set of $T$-doped Clifford circuits $\mc{C}_\tau$ with $\tau$ $T$-gates, the uniform $4-$fold average over some four-replica quantity $X \in \mc{H}^{\ot 4}$ is
    \begin{equation}
    \Phi^{(4)}_{\mc{C}_\tau}(X) = \sum_{\pi, \sigma \in\mc{S}_4} 
    \left[ \left( (\Xi^\tau)_{\pi \sigma} \Lambda^+ + \Gamma^{(\tau)}_{\pi \sigma} \right) c_{\pi}(X) + \delta_{\pi \sigma} b_{\pi}(X) \right] T_{\sigma} \label{eq:lorenzo}
\end{equation}
where $\Lambda^\pm$ are defined as in Eq.~\eqref{eq:lambdas}, 
\begin{equation}
    \Xi_{\sigma \pi} \equiv \sum_{\mu \in S_4} \left[ \mathrm{Wg}^+_{\pi \mu} \mathrm{tr} \left( T_{\sigma} K^{\otimes 4} \Lambda^+ K^{\dagger \otimes 4} \Lambda^+ T_{\mu} \right) - \mathrm{Wg}^-_{\pi \mu} \mathrm{tr} \left( T_{\sigma} K^{\otimes 4} \Lambda^+ K^{\dagger \otimes 4} \Lambda^{-} T_{\mu} \right) \right]
\end{equation}
with
\begin{equation}
    \Gamma^{(\tau)}_{\pi \sigma} = \sum_{\mu \in S_4} F_{\pi \mu} \sum_{i=0}^{\tau-1} (\Xi^i)_{\mu \sigma}
\end{equation}
\begin{equation}
    F_{\pi \mu} = \sum_{\sigma \in S_4} \mathrm{Wg}^-_{\pi \sigma} \mathrm{tr} \left(  K^{\otimes 4} \Lambda^+ K^{\dagger \otimes 4} \Lambda^{-} T_{\sigma \mu} \right)
\end{equation}
and the information about $X$ is contained in the coefficients (from the first Clifford averaging)
\begin{equation}
    c_{\pi}(X) = \sum_{\sigma \in S_4} \left[ \mathrm{Wg}^+_{\pi \sigma} \mathrm{tr}(X \Lambda^+ T_{\sigma}) - \mathrm{Wg}^-_{\pi \sigma} \tr[X \Lambda^{-} T_{\sigma}] \right]
\end{equation}
\begin{equation}
    b_{\pi}(X) = \sum_{\sigma \in S_4} \mathrm{Wg}^-_{\pi \sigma} \mathrm{tr}(X \Lambda^{-} T_{\sigma}).
\end{equation}
\end{thm}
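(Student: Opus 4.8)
The plan is to prove Eq.~\eqref{eq:lorenzo} by induction on the number of $T$-gates $\tau$, treating each ``Clifford layer followed by a single $T$-gate'' as one step of a linear recursion inside the four-fold Clifford commutant. First I would write a generic element of $\mathcal{C}_\tau$ as $U = C_\tau K C_{\tau-1} K \cdots K C_0$ with each $C_i \sim \mathcal{C}_N$ independent and $K = P_{\pi/4}\otimes\id_{N-1}$ the fixed single-qubit $T$-gate, and then nest the independent Clifford averages from the inside out. The innermost average (the one conjugating $X$ directly) is a single four-fold Clifford twirl, which by the Weingarten formula Eq.~\eqref{eq:cliff_av} lands in the commutant spanned by $\{\Lambda^+ T_\sigma\}\cup\{T_\sigma\}$ and produces precisely the coefficients $c_\pi(X)$ and $b_\pi(X)$. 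This is the base case $\tau=0$, where $\Xi^0=\id$ and the empty sum gives $\Gamma^{(0)}=0$.

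The key structural observation, which I would establish next, is that one full layer acts \emph{triangularly} on the two sectors of the commutant. Writing a generic commutant element as $\sum_\sigma p_\sigma\,\Lambda^+ T_\sigma + \sum_\sigma q_\sigma\,T_\sigma$, I note two facts: (i) $T_\sigma$ commutes with $K^{\otimes 4}$, since a replica permutation commutes with any product operator acting identically on each replica; and (ii) $T_\sigma$ already lies in the Clifford commutant, hence is fixed by the subsequent twirl. Together these imply the pure-$T_\sigma$ sector is invariant under one layer. The only nontrivial dynamics is therefore the image of $\Lambda^+ T_\sigma$: conjugation sends $\Lambda^+\mapsto K^{\otimes4}\Lambda^+K^{\dagger\otimes4}$ (with $T_\sigma$ passing through), after which the Clifford twirl resolves this into a $\Lambda^+$-component and a $\Lambda^-$-component. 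Evaluating the defining traces $\tr(T_\sigma K^{\otimes4}\Lambda^+K^{\dagger\otimes4}\Lambda^{\pm}T_\mu)$ via Lemma~\ref{lem:2} identifies the $\Lambda^+\!\to\!\Lambda^+$ transfer matrix as $\Xi$ and the $\Lambda^+\!\to\! T_\sigma$ leakage as $F$; the mixed-sign structure of the $\mathrm{Wg}^{\pm}$ terms arises precisely from rewriting the $\Lambda^-$ image as $\Lambda^- T_\rho = T_\rho - \Lambda^+ T_\rho$ in the chosen (non-orthogonal) basis.

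With the layer map in hand, the recursion reads $\vec p^{\,(t+1)} = \Xi\,\vec p^{\,(t)}$ and $\vec q^{\,(t+1)} = \vec q^{\,(t)} + F\,\vec p^{\,(t)}$, with initial data $\vec p^{\,(0)} = \vec c(X)$ and $\vec q^{\,(0)} = \vec b(X)$. Solving by induction yields $\vec p^{\,(\tau)} = \Xi^\tau \vec c(X)$ and $\vec q^{\,(\tau)} = \vec b(X) + \sum_{i=0}^{\tau-1} F\,\Xi^{i}\,\vec c(X)$. Recognizing the accumulated leakage $\sum_{i=0}^{\tau-1} F\,\Xi^{i}$ as $\Gamma^{(\tau)}$ and reassembling $\sum_\sigma p^{(\tau)}_\sigma\,\Lambda^+ T_\sigma + \sum_\sigma q^{(\tau)}_\sigma\,T_\sigma$ then gives Eq.~\eqref{eq:lorenzo} directly, after matching indices using the symmetry $\mathrm{Wg}_{\pi\sigma}=\mathrm{Wg}_{\sigma\pi}$.

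The main obstacle I anticipate is establishing the triangular (feed-forward, no feedback) structure cleanly: one must verify both that the pure-$T_\sigma$ sector is genuinely fixed by a whole layer, and that the $\Lambda^-$ image of $\Lambda^+$ distributes over the two sectors with exactly the coefficients bundled into $\Xi$ and $F$. This demands careful bookkeeping in the non-orthogonal basis $\{\Lambda^+ T_\sigma, T_\sigma\}$ and of the index conventions of the generalized Weingarten functions $\mathrm{Wg}^\pm$, together with the explicit single-qubit trace factors $f_\sigma$ supplied by Lemma~\ref{lem:2}. Once this is settled, the remaining work—nesting the averages and summing the geometric series in $\Xi$—is routine.
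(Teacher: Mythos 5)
The paper never proves this theorem: it is imported verbatim from Ref.~\cite{Leone2021quantumchaosis} (itself building on Ref.~\cite{zhu2016cliffordgroupfailsgracefully}), so there is no in-paper proof to compare against; your layer-by-layer recursion is, in substance, the derivation used in that reference. Its skeleton is sound: the base case $\tau=0$ follows from Eq.~\eqref{eq:cliff_av} after rewriting $\Lambda^- T_\sigma = T_\sigma - \Lambda^+ T_\sigma$, your structural facts (i)--(ii) are both correct (replica permutations commute with $K^{\ot 4}$ and are fixed points of the Clifford twirl), and the resulting feed-forward triangular action on the non-orthogonal spanning set $\{\Lambda^+ T_\sigma\}\cup\{T_\sigma\}$ is exactly the right mechanism, with the mixed $\mathrm{Wg}^{\pm}$ signs arising precisely as you say.

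The one place your write-up has a genuine hole is the final identification of the accumulated leakage with $\Gamma^{(\tau)}$. Running your recursion literally, the leakage created at layer $i+1$ is ``evolve the $\Lambda^+$ sector for $i$ steps, then leak,'' which produces $\sum_{i=0}^{\tau-1}\Xi^i F$ (in the convention where the first index of $\Xi$, $F$ is the input and contraction with $c_\pi$ is on that index); the theorem instead asserts $\Gamma^{(\tau)} = F\sum_{i=0}^{\tau-1}\Xi^i$. These coincide only if $\Xi$ and $F$ commute, and the symmetry $\mathrm{Wg}_{\pi\sigma}=\mathrm{Wg}_{\sigma\pi}$ you invoke is not sufficient to establish that. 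What saves the argument is a stronger fact you should state and prove: every matrix in play has entries that are class functions of the \emph{product} of its two permutation indices --- $\mathrm{Wg}^{\pm}_{\pi\sigma}$ through the characters $\chi^{\lambda}(\pi\sigma)$, and the trace matrices $\tr[K^{\ot 4}\Lambda^+K^{\dagger\ot 4}\Lambda^{\pm}T_{\sigma\mu}]$ because $K^{\ot 4}\Lambda^+K^{\dagger\ot 4}$ and $\Lambda^{\pm}$ commute with every $T_\kappa$ --- and all such matrices mutually commute, being convolution operators by central elements of the group algebra of $\mc{S}_4$ (here one also uses that every element of $\mc{S}_4$ is conjugate to its inverse). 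With that lemma, $\Xi^i F = F\,\Xi^i$, the two possible orderings of the leakage matrix agree, and your reassembly yields Eq.~\eqref{eq:lorenzo} exactly. So: correct approach and correct recursion, but the commutativity step is a substantive claim that must be proved rather than filed under bookkeeping.
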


For our purposes, the quantity of interest is the operator purity, 
\begin{equation}
    \int_{U \in \mc{C}_\tau} E^{(\text{pur})}_A = \frac{1}{D^2} \int_{U \in \mc{C}_\tau} \tr[ (U^\dg)^{\otimes 4} O^{\ot 4} U^{\ot 4} T'] = \frac{1}{D^2}  \tr[ \Phi^{(4)}_{\mc{C}_\tau}(O^{\ot 4}) T'].
\end{equation}
As earlier, the first step involves a global Clifford average, so the identity of the initial operator $O$ (beyond it being a traceless Pauli string) is irrelevant. As a first step, we simplify the coefficients using Lemma~\ref{lem:1}
\begin{align}
    c_\pi (O^{\ot 4} ) &= \sum_{\sigma \in S_4} \left[ \mathrm{Wg}^+_{\pi \sigma} \mathrm{tr}(O^{\ot 4} \Lambda^+ T_{\sigma}) - \mathrm{Wg}^-_{\pi \sigma} \mathrm{tr}(O^{\ot 4} \Lambda^{-} T_{\sigma}) \right] \\
    &= \sum_{\sigma \in S_4} \left[ (\mathrm{Wg}^+_{\pi \sigma} + \mathrm{Wg}^-_{\pi \sigma}  ) \frac{1}{D^2} \sum_{P \in \mc{P}_N} \mathrm{tr}(O^{\ot 4} P^{\ot 4} T_{\sigma}) - \mathrm{Wg}^-_{\pi \sigma} \mathrm{tr}(O^{\ot 4} T_{\sigma}) \right]\\
    &= \sum_{\sigma \in S_4} \left[ (\mathrm{Wg}^+_{\pi \sigma} + \mathrm{Wg}^-_{\pi \sigma}  ) D^{\#(\sigma)-2\delta_{\sigma o}} - \mathrm{Wg}^-_{\pi \sigma} \delta_{\sigma e} D^{\#(\sigma)} \right],
\end{align}
where we have used the $\delta_{\sigma o}$ notation defined in Eq.~\eqref{eq:deltas}. Similarly, 
\begin{equation}
    b_{\pi}(O^{\ot 4} )= \sum_{\sigma \in S_4} \left[ \mathrm{Wg}^-_{\pi \sigma} \delta_{\sigma e} D^{\#(\sigma)}-\mathrm{Wg}^-_{\pi \sigma}   D^{\#(\sigma)-2\delta_{\sigma o}} \right] = -\sum_{\sigma \in S_o} \mathrm{Wg}^-_{\pi \sigma}   D^{\#(\sigma)-2}.
\end{equation}
Now, for the matrix $F$, we assume that $K$ is a single qubit $R^Z_\theta $ gate wlog acting on the first qubit, which we call $P_\theta$. For the rest of the proof, we will use the Lemmas~\ref{lem:1}-\ref{lem:2}.

Now, we will consider each of the factors in Lemma~\ref{lem:lorenzo}. Directly applying the Lemma~\ref{lem:2} to the matrix $F$, we find
\begin{align}
    F_{\pi \mu} =& \sum_{\sigma \in S_4} \mathrm{Wg}^-_{\pi \sigma} \mathrm{tr} \left(  K^{\otimes 4} \Lambda^+ K^{\dagger \otimes 4} \Lambda^{-} T_{\sigma \mu} \right) \\
    =&\sum_{\sigma \in S_4} \mathrm{Wg}^-_{\pi \sigma} \left( \mathrm{tr}  [ K^{\otimes 4} \Lambda^+ K^{\dagger \otimes 4} T_{\sigma \mu} ] - \mathrm{tr} [ K^{\otimes 4} \Lambda^+ K^{\dagger \otimes 4} \Lambda^+ T_{\sigma \mu} ]\right) \\
    =&\sum_{\sigma \in S_4} \mathrm{Wg}^-_{\pi \sigma} (  D^{\#(\sigma \mu) -2 \delta_{(\sigma \mu) o}} -D^{\#(\sigma \mu) -2 \delta_{(\sigma \mu) o}}_{N-1} f_{(\sigma \mu)} ).
\end{align}
The factor $f_{(\sigma \mu)}$ can be found through direct calculation, as it is only a single-qubit trace equation. Then, also directly, 
\begin{align}
    \Xi_{\sigma \pi} &= \sum_{\mu \in S_4} \left[ \mathrm{Wg}^+_{\pi \mu} \mathrm{tr} \left(  K^{\otimes 4} \Lambda^+ K^{\dagger \otimes 4} \Lambda^+ T_{\mu \sigma} \right) - \mathrm{Wg}^-_{\pi \mu} \mathrm{tr} \left(  K^{\otimes 4} \Lambda^+ K^{\dagger \otimes 4} \Lambda^{-} T_{\mu \sigma} \right) \right]\\
    &= \sum_{\mu \in S_4} \left[ (\mathrm{Wg}^+_{\pi \mu} + \mathrm{Wg}^-_{\pi \mu})\mathrm{tr} \left(  K^{\otimes 4} \Lambda^+ K^{\dagger \otimes 4} \Lambda^+ T_{\mu \sigma} \right) - \mathrm{Wg}^-_{\pi \mu} \mathrm{tr} \left(  K^{\otimes 4} \Lambda^+ K^{\dagger \otimes 4} T_{\mu \sigma} \right) \right] \\
    &=\sum_{\mu \in S_4} \left[ (\mathrm{Wg}^+_{\pi \mu} + \mathrm{Wg}^-_{\pi \mu})D^{\#(\mu \sigma ) -2 \delta_{(\mu \sigma) o}}_{N-1} f_{(\mu \sigma)} - \mathrm{Wg}^-_{\pi \mu} D^{\#(\mu \sigma ) -2 \delta_{(\mu \sigma) o}} \right].
\end{align}
We can exactly diagonalize this $24 \times  24$ matrix symbolically to perform arbitrary matrix powers in the full expression \eqref{eq:lorenzo}. 

Finally, for the `outside' terms, we have the factors 
\begin{equation}
    \tr[\Lambda^+ T_\sigma T'] = D_{A}^{\# (\sigma {(12)(34)}) -2\delta_{(\sigma {(12)(34)}) o}} D_{\bar{A}}^{\#(\sigma {(14)(23)})-2\delta_{(\sigma {(14)(23)}) o}}
\end{equation}
and 
\begin{equation}
    \tr[T_\sigma T'] = D_{A}^{\# (\sigma {(12)(34)})} D_{\bar{A}}^{\#(\sigma {(14)(23)})},
\end{equation}
which we solved for in Eqs.~\eqref{eq:lamTT}-\eqref{eq:TT}.

Putting this all together, we have
 \begin{align}
    &\int_{U \in \mc{C}_\tau} E^{(\text{pur})}_A = \frac{1}{D^2} \tr[ \sum_{\pi, \sigma \in\mc{S}_4} 
    \left[ \left( (\Xi^\tau)_{\pi \sigma} \Lambda^+ + \Gamma^{(\tau)}_{\pi \sigma} \right) c_{\pi}(X) + \delta_{\pi \sigma} b_{\pi}(X) \right] T_{\sigma} T' ]   \nn\\
    &\quad =\frac{1}{D^2} \Big( \sum_{\pi, \sigma \in\mc{S}_4} 
    \Big[ \big( (\Xi^\tau)_{\pi \sigma} \underbrace{D_{A}^{\# (\sigma {(12)(34)}) -2\delta_{(\sigma {(12)(34)}) o}} D_{\bar{A}}^{\#(\sigma {(14)(23)})-2\delta_{(\sigma {(14)(23)}) o}}}_{ \tr[\Lambda^+  T_{\sigma} T' ]}  \nn\\
    &\quad  + \sum_{\kappa \in S_4} \underbrace{\sum_{\zeta \in S_4} \mathrm{Wg}^-_{\pi \zeta} (  D^{\#(\zeta \kappa) -2 \delta_{(\zeta \kappa) o}} -D^{\#(\zeta \kappa) -2 \delta_{(\zeta \kappa) o}}_{N-1} f_{(\zeta \kappa)} ) }_{F_{\pi \kappa}} \sum_{i=0}^{\tau-1} (\Xi^i)_{\kappa \sigma} \underbrace{D_{A}^{\# (\sigma {(12)(34)})} D_{\bar{A}}^{\#(\sigma {(14)(23)})}}_{\tr[T_\sigma T']} \Big] \label{eq:lorenzo1} \\
    &\quad \times \underbrace{\sum_{\mu \in S_4} \left[ (\mathrm{Wg}^+_{\pi \mu} + \mathrm{Wg}^-_{\pi \mu}  ) D^{\#(\mu)-2\delta_{\mu o}} - \mathrm{Wg}^-_{\pi \mu} \delta_{\mu e} D^{\#(\mu)} \right]}_{c_\pi(O^{\ot 4})} - \delta_{\pi \sigma} \underbrace{\left[ \sum_{\kappa \in S_o} \mathrm{Wg}^-_{\pi \kappa}   D^{\#(\kappa)-2} \right] D_{A}^{\# (\sigma {(12)(34)})} D_{\bar{A}}^{\#(\sigma {(14)(23)})}}_{b_\pi(O^{\ot 4}) \tr[T_\sigma T']} \Big) \nn
\end{align}
Evaluating this symbolically, for the case of $N_A=N/2$ (the arbitrary $N_A$ case is too long and complex to report here), we arrive at 
\begin{align}
    \int_{U \in \mc{C}_\tau} E^{(\text{pur})}_A=& \frac{1}{D^2-1} + \frac{D (272 + 48 D^2 + D^4)\left( -(-4 + 3 D(D-1) )^\tau + 4^\tau (D^2-1)^\tau \right) }{4^{1 +\tau} (D^2-1)^\tau (D-1) (D+1) (D+2) (D+3) (D+4)}\\
    &+ \frac{(D-2 ) (D-1 ) (3 D (D+1)-4)^\tau +(D+1) \left((D+2) (3 D (D-1)-4)^\tau +4 (D-1 ) (3 D^2-4)^\tau\right)}{6 \cdot 4^{\tau} (D^2-1)^{\tau  + 1} }.
\end{align}
Expanding the above expression to leading order in large $D$, and taking the $-\log$, we arrive at, 
\begin{align}
    -\log\left(\int_{U \in \mc{C}_\tau} E^{(\text{pur})}_A\right) = -\log\left( (3/4)^\tau - \frac{2}{D}((3/4)^\tau +1) + \mc{O}(3^\tau /D^2 ) \right) = \log({4}/{3}) \tau + \mc{O}\left( \frac{(4/3)^\tau}{D}\right) 
\end{align}
which is valid for $1/D \ll 1$. Applying Jensen's inequality and the hierarchy of R\'enyi entropies (see App.~\ref{ap:loe_purity}), we arrive at the second lower-bound of Thm.~\ref{thm:main_lower}. 

Note that, as expected~\cite{Harrow2009,Leone2021quantumchaosis}, taking the limit of the full expression with $\tau \to \infty$ we arrive at the Haar average value of the operator purity; see Corollary~\ref{cor:haar1} and Eq.~\eqref{eq:haar}.

\end{document}